\documentclass[11pt]{article}
\usepackage[a4paper,margin=25mm]{geometry}
\usepackage{amsmath,amssymb,amsthm,booktabs,longtable,array,enumitem}
\usepackage{float}
\usepackage[numbers,sort&compress]{natbib}
\usepackage{algorithm,algpseudocode}
\restylefloat{algorithm}
\usepackage[colorlinks=true,linkcolor=blue,citecolor=blue,urlcolor=blue]{hyperref}
\newtheorem{theorem}{Theorem}
\newtheorem{proposition}[theorem]{Proposition}
\newtheorem{lemma}[theorem]{Lemma}

\theoremstyle{definition}
\newtheorem{definition}[theorem]{Definition}
\newtheorem{remark}[theorem]{Remark}

\newcommand{\Z}{\mathbb Z}
\newcommand{\F}{\mathbb F_2}
\newcommand{\RM}{\operatorname{RM}}
\newcommand{\rk}{\operatorname{rank}}
\newcommand{\Ker}{\operatorname{K}}

\newcommand{\dig}{\operatorname{dig}}
\renewcommand{\vec}[1]{\boldsymbol{#1}}   
\newcommand{\spn}{\operatorname{span}}

\hypersetup{pdftitle={The rank of Z_{2^s}-linear Hadamard codes},
	pdfsubject={Coding theory},pdfkeywords={Hadamard codes, Z_{2^s}-linear codes, rank, Gray map}}

\title{The rank of $\mathbb{Z}_{2^s}$-linear Hadamard codes\thanks{The work of Minjia Shi was supported by the National Natural Science Foundation of China under Grants W2611002 and 12471490. }}
\author{Xuan Wang \thanks{School of Mathematical Sciences, Anhui University. Email: \tt wang\_xuan\_ah@163.com}, Minjia Shi \thanks{Corresponding author. School of Mathematical Sciences, Anhui University. Email: \tt smjwcl.good@163.com}}
\date{\today}
\begin{document}
	\maketitle
	\thispagestyle{empty}
	
	\begin{abstract}
		The canonical $\mathbb{Z}_{2^s}$-linear Hadamard codes are the Gray images of the $\mathbb{Z}_{2^s}$-additive codes. 
		Their rank, the dimension of the binary linear span, was known in closed form only for $s\le 3$, and for $s \ge 4$ only through computer computations for small lengths.  
		We give a method that computes the rank for every $s$.  
		We show that the rank is a binomial polynomial in the type parameters, with coefficients $\rho_s(\boldsymbol{k})$ independent of the length. 
		We compute all the coefficients for $s\le 5$, which yields explicit rank formulas over $\mathbb{Z}_4$ and $\mathbb{Z}_8$ (recovering the known ones), and over $\mathbb{Z}_{16}$ and $\mathbb{Z}_{32}$. 
	\end{abstract}
	
	\medskip
	\noindent\textbf{Index Terms}---$\Z_{2^s}$-Hadamard codes, Gray map, rank
	
	\medskip
	\noindent\textbf{AMS classification}---94B05, 94B60, 11T71.
	
	\bigskip
	
	\section{Introduction}\label{sec:intro}
	
	Let $s\ge1$ and let $\Z_{2^s}$ be the ring of integers modulo $2^s$.  The set of $n$-tuples over $\Z_{2^s}$ is denoted by $\Z_{2^s}^n$, and its elements are called vectors.  The \emph{order} $o(\vec u)$ of $\vec u\in\Z_{2^s}^n$ is the least positive integer $m$ with $m\vec u=\vec 0$.  A \emph{$\Z_{2^s}$-additive code} is a subgroup $\mathcal C\subseteq\Z_{2^s}^n$. It is of \emph{type} $(n;t_1,\dots,t_s)$ when
	\[
	\mathcal C\cong\Z_{2^s}^{t_1}\times\Z_{2^{s-1}}^{t_2}\times\cdots\times\Z_2^{t_s}.
	\]
	A binary code $C\subseteq\F^N$ is \emph{Hadamard} if $|C|=2N$ and any two distinct codewords differ in exactly $N/2$ coordinates. Equivalently, $C$ consists of the rows of a Hadamard matrix together with their translates.  Hadamard codes are closely related to designs \cite{AssmusKey,MS}.  The linear Hadamard codes are exactly the first-order Reed--Muller codes $\RM(1,t)$, that is, the duals of the extended Hamming codes \cite{MS}.  Most Hadamard codes are nonlinear, and their classification is a classical problem.  Algebraic techniques for nonlinear codes are developed in \cite{Bauer}.  A fruitful approach, initiated by the $\Z_4$-linear representation of the Kerdock, Preparata and Goethals codes \cite{Hammons}, is to realise a nonlinear Hadamard code as the Gray image of an additive code over a ring $\Z_{2^s}$ or over a mixed alphabet, and then to decide which of the resulting codes are equivalent by means of invariants.
	
	Carlet's Gray map $\phi_s:\Z_{2^s}\to\F^{2^{s-1}}$ is defined by
	\[
	\phi_s(u)=(u_{s-1},\dots,u_{s-1})+(u_0,\dots,u_{s-2})Y_{s-1},
	\]
	where $u=\sum_{j=0}^{s-1}u_j2^j$ with $u_j\in\{0,1\}$ and the columns of the $(s-1)\times2^{s-1}$ binary matrix $Y_{s-1}$ are all the vectors of $\F^{s-1}$. Thus $\phi_1$ is the identity and $\phi_2$ is the classical Gray map.  We write $\Phi_s=\phi_s^{\times n}$ for the componentwise extension.  If $\mathcal C\subseteq\Z_{2^s}^n$ is additive, then $\Phi_s(\mathcal C)$ is called a \emph{$\Z_{2^s}$-linear code}, of length $n2^{s-1}$.  For every type $\vec a=(a_1,\dots,a_s)$ with $a_1\ge1$, the recursive construction of \cite{FCV2016,DCC2022} produces a canonical $\Z_{2^s}$-additive code $\mathcal H^{\vec a}$ whose Gray image $H^{\vec a}$ is a binary Hadamard code of length $2^t$ with
	\begin{equation}\label{eq:length}
		t+1=\sum_{i=1}^{s}(s-i+1)a_i .
	\end{equation}
	For $s=2$, the codes $H^{\vec a}$ are the $\Z_4$-linear Hadamard codes, which contain the Kerdock, Preparata and Goethals codes and the linear Hadamard codes \cite{Hammons}. Carlet's Gray map provides the $\Z_{2^s}$-linear framework \cite{Carlet}.  Codes over integer residue rings go back to Blake \cite{Blake} and Shankar \cite{Shankar}, and linear codes over $\Z_{2^s}$ were further studied in \cite{Gupta}.  Several Gray maps and self-dual codes over $\Z_{2^k}$ are studied in \cite{Dougherty}.  Every $\Z_{2^k}$-code is a binary propelinear code \cite{FCVRB2001}.  For every $s$, every $\Z_{2^s}$-linear Hadamard code is permutation equivalent to one of the canonical codes $H^{\vec a}$ \cite{CFV2019}.  The classification problem for the family is therefore reduced to deciding when two types $\vec a$ and $\vec a'$ give equivalent codes, and the two classical invariants for this comparison are the rank and the kernel.
	
	The \emph{rank} of a binary code $C$ is the dimension $\rk(C)=\dim\langle C\rangle$ of its linear span, and its \emph{kernel} is
	\[
	\Ker(C)=\{\vec x:\vec x+C=C\}.
	\]
	We write $\ker(C)=\dim\Ker(C)$ for the dimension of the kernel.  Since linear Hadamard codes are the first-order Reed--Muller codes, their rank is $t+1$ and their kernel is the whole code. For a nonlinear Hadamard code normalized so that $\vec0\in C$, the rank is strictly larger than the linear value $t+1$, while the kernel is a proper subcode, so both invariants carry information.  If $\vec 0\in C$ then $\Ker(C)$ is a linear subcode of $C$, and if $C$ is linear then $\Ker(C)=C=\langle C\rangle$.  The pair $(\rk(C),\ker(C))$ is invariant under translations and permutations of coordinates, so two codes with different pairs are nonequivalent.  The converse fails in general, and the failure is precisely what makes the classification of nonlinear Hadamard codes difficult. A substantial part of the literature is devoted to deciding, family by family, how far the pair $(\rk,\ker)$ goes and which finer invariants are needed when it does not suffice.
	
	For $\Z_4$-linear Hadamard codes either the rank or the kernel dimension classifies the family completely \cite{Krotov2001,PRV2006}.  There are exactly $\lfloor(t-1)/2\rfloor$ nonequivalent such codes of length $2^t$ for $t\ge3$ \cite{CFV2019}.  For $\Z_8$-linear Hadamard codes the rank and the kernel dimension together give a complete classification, and the rank admits a closed formula \cite{Z8}.  The rank of the $\Z_8$-linear codes was also studied in \cite{FCV2018}.  For the mixed $\Z_2\Z_4$-linear and $\Z_2\Z_4\Z_8$-linear families the situation is more recent.  The $\Z_2\Z_4$-linear Hadamard codes were classified in \cite{KrotovV2015}, and generator matrices and duality for $\Z_2\Z_4$-linear codes are given in \cite{Borges}.  The rank of every $\Z_2\Z_4\Z_8$-linear Hadamard code is now known in closed form, and the rank together with the kernel dimension classifies the family except for two infinite pairs of types \cite{BhuniaR8}.  The kernel-block rank profile, which records the dimensions of the spans of the code punctured on the blocks determined by its kernel, is constant on the nonlinear $\Z_2\Z_4$-linear and $\Z_{2^s}$-linear families and, with a two-block refinement, completes the classification of the $\Z_2\Z_4\Z_8$-linear family \cite{Profile}.  In the pure $\Z_{2^s}$ family, the construction was developed in \cite{FCV2016}, the kernel was studied in \cite{FCV2017}, the linearity and the dimension of the kernel were determined in \cite{CFV2019}, and the equivalence classification was completed only recently \cite{Bhunia2026}, after the partial results of \cite{Equiv,CFV2019}: for $t\ge3$ there are $\mathcal P(t+1)-\mathcal P(t)-1$ nonequivalent $\Z_{2^s}$-linear Hadamard codes of length $2^t$, where $\mathcal P$ is the partition function and $s\ge2$ is allowed to vary.  For odd primes the linearity and the kernel are treated in \cite{DCC2022}, the classification of the generalized Hadamard codes is completed in \cite{Bhunia2026}, and the relation between the $\Z_p\Z_{p^2}$-linear and the $\Z_{p^s}$-linear families is studied in \cite{BhuniaOdd}.
	
	The present paper is concerned with the remaining numerical invariant, the rank.  For $s\le3$ it is known in closed form: for $\Z_4$,
	\[
	\rk H^{a_1,a_2}=2a_1+a_2+\binom{a_1-1}{2},
	\]
	and for $\Z_8$ the quartic formula of \cite[Theorem 4.1]{Z8}.  For $s\ge4$ no closed formula was known.  The ranks used in the literature were obtained by computer algebra, one type at a time, and only for small lengths: the $\Z_{16}$ ranks of the admissible types of length $2^9$, $2^{10}$ and $2^{11}$ are tabulated in \cite[Table 3]{CFV2019}, the $\Z_8$-linear codes of length $2^t$ are tabulated for $t\le17$ in \cite{Z8}, and \cite[Section 7]{Profile} and \cite{Bhunia2026} recall that in the other cases the rank was computed with Magma \cite{Bosma} for small lengths.  The recent closed formula for the mixed $\Z_2\Z_4\Z_8$ family \cite{BhuniaR8} does not cover the pure $\Z_{2^s}$ family with $s\ge4$.  What is missing is a method that produces the rank for every $s$ without treating the types and lengths one at a time.
	
	The main result of this paper is such a method.  We show that, for every $s$ and every type $\vec a=(a_1,\dots,a_s)$,
	\begin{equation}\label{eq:rankpoly:intro}
		\rk H^{\vec a}=\sum_{\vec k}\rho_s(\vec k)\prod_{\ell=0}^{s-1}\binom{\mu_\ell}{k_\ell},
		\qquad \mu_0=a_1-1,\quad \mu_\ell=a_{\ell+1}\ (1\le\ell<s),
	\end{equation}
	where the coefficients $\rho_s(\vec k)$ depend only on the alphabet exponent $s$ and on the multiplicity vector $\vec k=(k_0,\dots,k_{s-1})$, and neither on the type nor on the length.  The sum is finite, over the feasible vectors with $\sum_\ell2^\ell k_\ell\le2^{s-1}$. Their number is
	\[
	N_s=\bigl|\{\vec k:\textstyle\sum_\ell2^\ell k_\ell\le2^{s-1}\}\bigr|=\sum_{n=0}^{2^{s-1}}b(n)=4,10,36,202,\dots
	\]
	for $s=2,3,4,5,\dots$, where $b$ is the binary partition function.  The proof is a support decomposition of the digit space: the exact-support projections $\Pi_S$ split the space of message functions into a direct sum of local spaces $\mathcal W_s(S)$, and $\rho_s(\vec k)$ is the dimension of any one of them (Theorem~\ref{thm:rankformula}).  The coefficients are computed by a finite exact algorithm (Algorithm~\ref{alg:algo}) based on the coefficient representation of Lemma~\ref{lem:coeffrep}.  For $s\le5$ this yields complete coefficient tables, from which we recover the known closed formulas over $\Z_4$ and $\Z_8$ and obtain explicit binomial polynomials over $\Z_{16}$ and $\Z_{32}$. The $\Z_{16}$ values reproduce every entry of \cite[Table 3]{CFV2019}.  For general $s$ we do not claim a single closed formula: what is proved is the representation \eqref{eq:rankpoly:intro} together with the finite algorithm, and Section~\ref{sec:ranks} discusses what a compact description of $\rho_s(\vec k)$ for general $s$ would amount to.
	
	The paper is organized as follows.  Section~\ref{sec:notation} fixes the concepts and the notation.  Section~\ref{sec:rank} presents the rank formula in terms of $\rho_s(\vec{k})$ and the binomial coefficients.  Section~\ref{sec:ranks} deals with the ranks over $\Z_4$, $\Z_8$, $\Z_{16}$ and $\Z_{32}$, tabulates the coefficients for $s\le5$, and discusses the compact description for general $s$.  
	Section~\ref{sec:conclusion} concludes the paper.  Appendix~\ref{app:algo} contains the pseudocode and its correctness proof.

	\section{Concepts and notations}\label{sec:notation}
	
	\subsection{Carlet's Gray map}
	For $u=\sum_{j=0}^{s-1}u_j2^j\in\Z_{2^s}$, Carlet's Gray map $\phi_s:\Z_{2^s}\to\F^{2^{s-1}}$ is
	\begin{equation}\label{eq:gray}
		\phi_s(u)=(u_{s-1},\dots,u_{s-1})+(u_0,\dots,u_{s-2})Y_{s-1},
	\end{equation}
	where the columns of the $(s-1)\times2^{s-1}$ binary matrix $Y_{s-1}$ are all the vectors of $\F^{s-1}$. Thus $\phi_1$ is the identity and $\phi_2$ is the classical Gray map.  
	We put $\Phi_s=\phi_s^{\times n}$ for the componentwise extension.  
	We use the following \emph{additivity} of $\phi_s$, which is immediate from \eqref{eq:gray}:
	\[
	\phi_s(2^{s-1}\lambda+w)=\phi_s(2^{s-1}\lambda)+\phi_s(w)\qquad(\lambda,w\in\Z_{2^s}),
	\]
	and the digit identity
	\[
	2^{s-1}(w\bmod 2)=2^{s-1}w\quad\text{in }\Z_{2^s},
	\]
	because the two sides differ by $2^{s-1}$ times an even integer.
	
	\subsection{$\Z_{2^s}$-Hadamard codes}
	Let $s\ge1$ and let $\Z_{2^s}$ be the ring of integers modulo $2^s$. The order $o(u)$ of $u\in\Z_{2^s}$ is the least $m\ge1$ with $mu=0$.  
	A \emph{$\Z_{2^s}$-additive code} is a subgroup $\mathcal C\subseteq\Z_{2^s}^n$. It is of \emph{type} $(n;t_1,\dots,t_s)$ when $\mathcal C\cong \Z_{2^s}^{t_1}\times\Z_{2^{s-1}}^{t_2}\times\cdots\times\Z_2^{t_s}$.  
	For $1\le i\le s$ put
	\[
	T_i=\{j2^{i-1}:0\le j\le 2^{s-i+1}-1\}=2^{i-1}\Z_{2^s},
	\]
	an ideal of $\Z_{2^s}$ of exponent $2^{s-i+1}$. A nonzero element of $T_i$ may have a smaller order (for instance $2\in T_1\subseteq\Z_8$ has order $4$), while a row ranging over the whole ideal has order $2^{s-i+1}$.  
	A row of a generator matrix of $\mathcal C$ whose entries run over $T_i$ has \emph{level} $i$ and order $2^{s-i+1}$. The \emph{order} of an element $w$ of $\Z_{2^s}^n$ is the least $m\ge1$ with $mw=\vec 0$. 
	Thus a row of level $i$ has order $2^{s-i+1}$.
	
	Let $a_1,\dots,a_s$ be nonnegative with $a_1\ge1$ and $m=a_1+\cdots+a_s$.  
	The canonical matrix $A^{\vec a}=A^{a_1,\dots,a_s}$ has the $m$ rows $w_1,\dots,w_m$, with $w_1$ the all-one row, and as columns exactly all the vectors $\vec z^{T}$ with
	\begin{equation}\label{eq:cols}
		\vec z=(1,z_1,\dots,z_{m-1})\in \mathcal{G}^{\vec{a}} := \{1\}\times T_1^{a_1-1}\times T_2^{a_2}\times\cdots\times T_s^{a_s}.
	\end{equation}
	the entry $z_i$ is placed in the row $w_{i+1}$.  For $1\le i\le s$ exactly $a_i$ of the rows have level $i$, and the rows are ordered by increasing level. We let $q_i+1$ be the level of the row $w_{i+1}$, so that $q_i\in\{0,\dots,s-1\}$ is the \emph{valuation offset} of the factor $i$, and then $z_i\in T_{q_i+1}=2^{q_i}\Z_{2^s}$.  
	The \emph{normalized coordinate} of the factor $i$ in the column $\vec z$ is
	\begin{equation}\label{eq:zeta}
		\zeta_i=\frac{z_i}{2^{q_i}}\in\Z_{2^{s-q_i}},\qquad 1\le i<m.
	\end{equation}
	Thus a column is described equivalently by its raw entries $(1,z_1,\dots,z_{m-1})$ or by its normalized coordinates $(\zeta_1,\dots,\zeta_{m-1})$. We write $\mathcal G$ for $\mathcal G^{\vec a}$ when the type is clear.  
	We write $\mathcal{H}^{a_1,\dots,a_s}$ for the $\Z_{2^s}$-additive code generated by the rows of $A^{a_1,\dots,a_s}$ and $H^{a_1,\dots,a_s}=\Phi_s(\mathcal{H}^{a_1,\dots,a_s})$ for its Gray image, a binary Hadamard code of length $2^t$ with \eqref{eq:length} \cite{CFV2019}.  
	Every $\Z_{2^s}$-additive Hadamard code is permutation equivalent to one of these, so we restrict to this family.

	The binary coordinates of $H^{a_1,\dots,a_s}$ are indexed by pairs $(\vec{z}, \vec{h})$, where $\vec{z}$ runs over the columns \eqref{eq:cols} and $\vec{h}\in\F^{s-1}$ is the vector of Gray digits.  
	The symbol $\vec{h}$ therefore has two readings, and we use both: for a fixed additive coordinate $\vec{z}$, the $2^{s-1}$ vectors $\vec{h}\in\F^{s-1}$ label the binary coordinates attached to $\vec{z}$, so that $(\vec{z},\vec{h})$ is a single binary coordinate. Conversely, $\vec{h}$ is a summation index, as in $\sum_{\vec{h}\in\F^{s-1}}$ or in \eqref{eq:coord}, and we write $\vec{h}=(h_0,\dots,h_{s-2})$ for its entries.  
	For a message $\vec x=(b,x_1,\dots,x_{m-1})$ we let $\vec{u}=\vec u(\vec x)$ be the corresponding codeword of $\mathcal{H}^{a_1,\dots,a_s}$ and $u(\vec z)$ its entry at the additive coordinate $\vec z$, which is a scalar.  The entry is
	\begin{equation}\label{eq:W}
		W_{\vec x}(\vec{z}) = u(\vec z) = b+\sum_{i=1}^{m-1} x_i z_i \in\Z_{2^s},
	\end{equation}
	where $b$ is the coefficient of the all-one row $w_1$ and $x_i$ the coefficient of the row $w_{i+1}$.  
	By \eqref{eq:gray}, the value of $\Phi_s(\vec{u})$ at $(\vec{z},\vec{h})$ is
	\begin{equation}\label{eq:coord}
		\phi_s\bigl(W_{\vec x}(\vec z)\bigr)_{\vec h} = \dig_{s-1}\bigl(W_{\vec x}(\vec z)\bigr) + \sum_{j=0}^{s-2} h_j\, \dig_j\bigl(W_{\vec x}(\vec z)\bigr),
	\end{equation}
	where $W_0,\dots,W_{s-1}$ are the binary digits of $W$, i.e.\ $W=\sum_{j=0}^{s-1}W_j2^j$ with $W_j\in\{0,1\}$.  Concretely, the value at the coordinate $(\vec{z},\vec h)$ is computed in three steps:
	\begin{enumerate}[label=(\roman*),leftmargin=*]
		\item expand the additive entry $W=W_{\vec x}(\vec{z})$ of \eqref{eq:W} in base $2$, obtaining the
		$s$ bits $W_0,\dots,W_{s-1}$.
		\item view $\vec{h}=(h_0,\dots,h_{s-2})$ as a vector of $s-1$ bits, one for each digit except the top
		one.
		\item add in $\mathbb F_2$ the top bit and the low digits selected by $\vec{h}$, as in \eqref{eq:coord}.
	\end{enumerate}
	
	\begin{definition}
		The \textbf{rank} of the $\mathbb{Z}_{2^s}$-additive code $\mathcal{H}^{\vec{a}} = \mathcal{H}^{a_1,\dots,a_s}$ is the dimension of the span of codewords of $H^{\vec{a}} = \Phi_s(\mathcal{H}^{\vec{a}})$, denoted by $\rk H^{\vec{a}} = \rk \mathcal{H}^{\vec{a}}$. 
	\end{definition}
	
	In this paper, we represent $\rk H^{\vec{a}}$ as a binomial polynomial in the type parameters, whose coefficients are computed by a finite exact algorithm.  
	The following table collects the symbols used throughout the paper.
	
	\subsection*{Index of notation}
	\begin{longtable}{@{}p{34mm}p{121mm}@{}}
		\toprule Symbol & Meaning\\\midrule\endhead
		$s,\ \Z_{2^s}$ & alphabet exponent and ring, $T_i=2^{i-1}\Z_{2^s}$ the ideal of level $i$\\
		$\phi_s,\Phi_s$ & Carlet's Gray map and its extension \\
		$\vec a=(a_1,\dots,a_s)$ & type: $a_i$ rows of level $i$, $m=\sum_ia_i$, $a_1\ge1$\\
		$A^{\vec{a}}$, or $A^{a_1, \dots, a_s}$ & the matrix \\
		$\mathcal{H}^{\vec{a}}$, or $\mathcal{H}^{a_1, \dots, a_s}$ & the $\mathbb{Z}_{2^s}$-code generated by $A^{\vec{a}}$ \\
		$H^{\vec{a}}$, or $H^{a_1, \dots, a_s}$ & the Gray image of $\mathcal{H}^{\vec{a}}$ \\
		$t$ & binary length $2^t$ of $H^{\vec a}$, with \eqref{eq:length}\\
		$(\vec{z},\vec{h})$ & $\vec z\in\mathcal G^{\vec a}$ a column of $A^{\vec a}$, written $\vec z=(1,z_1,\dots,z_{m-1})$, and $\vec h\in\F^{s-1}$ the Gray vector\\
		$q_i,\ \zeta_i$ & level $q_i+1$ of the row $w_{i+1}$, raw entry $z_i\in T_{q_i+1}$, normalized $\zeta_i=z_i/2^{q_i}$ \eqref{eq:zeta}\\
		$a_{i,r}$ & the $r$-th bit of the coefficient $x_i$, $0\le r<s-q_i$\\
		$(b, x_1,\dots,x_{m-1})$ & a message: $b$ for $w_1$, $x_i$ for $w_{i+1}$ \\
		$\mathcal{M}$ & the set of messages \\
		$W_{\vec x}(\vec{z})$ & the scalar entry $b+\sum_{i=1}^{m-1}x_iz_i$ at the row $\vec{x}$ and column $\vec{z}$ (see \eqref{eq:W})\\
		$\dig_j$ & $j$-th binary digit of an element \\
		$\rk$ & rank, the dimension of the span of $H^{\vec{a}}$ \\
		$\vec e_i$, $\vec 1$, $\vec 0$ & standard basis vectors, all-one and zero vectors\\
		\bottomrule
	\end{longtable}
	
	\section{The rank formula}\label{sec:rank}
	
	In this section, we will give the rank formula for a $\mathbb{Z}_{2^s}$-Hadamard code $H^{\vec{a}}$, where $\vec{a} = (a_1, \dots, a_s)$ is the type and $a_1 \geq 1$. 
	
	For every element $u \in \mathbb{Z}_{2^s}$, it can be uniquely expressed as 
	\[ u = u_0 + u_1 \cdot 2 + \dots + u_{s-1} \cdot 2^{s-1}, \qquad u_j \in \{0,1\}. \]
	We denote $u_j$ by $\dig_j(u)$, $0 \leq j < s$. 
	Let $\mathcal{M}$ be the set of all the messages $\vec{x}$, i.e., 
	\[ 
	\mathcal{M} = \mathbb{Z}_{2^s}^{a_1} \times \mathbb{Z}_{2^{s-1}}^{a_2} \times \cdots \times \mathbb{Z}_{2}^{a_s}. 
	\]
	In particular, for the message $\vec{x} = (b,x_1,\dots,x_{m-1})$, $\mathcal{M}$ can also be written as 
	\[ \mathcal{M} = \mathbb{Z}_{2^s} \times \mathbb{Z}_{2^{s-q_1}} \times \cdots \times \mathbb{Z}_{2^{s-q_{m-1}}}, \]
	where $b\in\Z_{2^s}$ and, for $1\le i<m$, $x_i\in\mathbb{Z}_{2^{s-q_i}}$ is the coefficient of a factor of level $q_i+1$. 
	
	\subsection{The digit space}
	
	By \eqref{eq:coord}, the binary code $H^{\vec a}$ is obtained by evaluating, on the message
	set, the functions
	\[
	F_{\vec x, \vec z, \vec h}=\phi_s\bigl(W_{\vec x}(\vec z)\bigr)_{\vec{h}} = \dig_{s-1}\bigl(W_{\vec x}(\vec z)\bigr) + \sum_{j=0}^{s-2} h_j \cdot \dig_{j}\bigl(W_{\vec x}(\vec z)\bigr).
	\]
	Obviously, $F_{\vec x, \vec z, \vec h}$ is the entry of the code $H^{\vec{a}}$ whose row is indexed by $\vec{x}$ and column is indexed by $(\vec{z},\vec{h})$. 
	Thus, for fixed pair $(\vec{z}, \vec{h})$, the vector 
	\[ F_{\vec{z}, \vec{h}} := (F_{\vec x, \vec z, \vec h})_{\vec{x} \in \mathcal{M}} \]
	is exactly the column of $H^{\vec{a}}$ indexed by $(\vec{z}, \vec{h})$. 
	Similarly, the vector $W_{\vec{z}} = (W_{\vec x}(\vec z))_{\vec{x} \in \mathcal{M}}$ is a column of $\mathcal{H}^{\vec{a}}$ over $\mathbb{Z}_{2^s}$. 
	
	We compute the rank in the space of Boolean functions of the messages. 
	
	\begin{lemma}\label{lem:digitspan}
		Let
		\[
		V_s=\spn_{\F}\bigl\{\dig_j(W_{\vec z}):\ 0\leq j<s, \ \vec{z}\in \mathcal{G} \bigr\}. 
		\]
		Then the column space of $H^{\vec a}$ is $V_s$. In particular $\rk H^{\vec a}=\dim V_s$.
	\end{lemma}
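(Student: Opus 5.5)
The plan is to prove the two inclusions between the column space $\spn_{\F}\{F_{\vec z,\vec h}\}$ and $V_s$. Since rank equals column rank, $\rk H^{\vec a}=\dim V_s$ then follows at once.

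For the inclusion of the column space in $V_s$, use the definition of $F_{\vec z,\vec h}$. For each fixed pair $(\vec z,\vec h)$, the coordinate formula \eqref{eq:coord} gives
\[
F_{\vec z,\vec h}=\dig_{s-1}(W_{\vec z})+\sum_{j=0}^{s-2}h_j\,\dig_j(W_{\vec z}),
\]
where $\dig_j(W_{\vec z})$ is applied entrywise over $\mathcal M$. This is an $\F$-combination of the generators of $V_s$, so every column lies in $V_s$.

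For the reverse inclusion, we show that each generator $\dig_j(W_{\vec z})$ is a combination of columns attached to the same $\vec z$. Taking $\vec h=\vec 0$ gives $F_{\vec z,\vec 0}=\dig_{s-1}(W_{\vec z})$. Taking $\vec h=\vec e_{j+1}$ (the vector whose only nonzero entry is $h_j$), for $0\le j\le s-2$, gives
\[
F_{\vec z,\vec e_{j+1}}=\dig_{s-1}(W_{\vec z})+\dig_j(W_{\vec z}).
\]
Hence
\[
\dig_j(W_{\vec z})=F_{\vec z,\vec e_{j+1}}+F_{\vec z,\vec 0}
\]
lies in the column space. This covers all $0\le j<s$ and all $\vec z\in\mathcal G$, so $V_s$ is contained in the column space.

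The rank statement then follows from the fact that the binary span of the codewords (the rows) has the same dimension as the span of the columns.

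The one point to handle carefully is a bookkeeping issue. The rows of the matrix $(F_{\vec x,\vec z,\vec h})$ are indexed by all messages $\vec x\in\mathcal M$, and the map $\vec x\mapsto\Phi_s(\vec u(\vec x))$ is a bijection onto $H^{\vec a}$, since $|\mathcal M|=|\mathcal H^{\vec a}|$. Even if distinct messages gave the same codeword, repeated rows would not change the row space. So the row span of this matrix is exactly $\langle H^{\vec a}\rangle$, and its dimension equals the column rank, which is $\dim V_s$. No step is a real obstacle; the edge case $s=1$ is trivial, since then $V_1$ is spanned by the columns $W_{\vec z}$ themselves.
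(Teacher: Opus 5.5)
Your proposal is correct and follows essentially the same route as the paper: both inclusions come from the columns with $\vec h=\vec 0$ and $\vec h$ a unit vector, and then row rank equals column rank. Your added remarks (that the rows indexed by $\mathcal M$ give exactly the codewords, repeated rows being harmless, and the trivial case $s=1$) are careful refinements of the same argument.
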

	\begin{proof}
		The codewords of $H^{\vec a}$ are the rows $\Phi_s((W_{\vec x}(\vec z))_{\vec z\in\mathcal G})$, indexed by the
		messages $\vec x$. 
		Moreover, the entry in the row $\vec x$ and the column $(\vec{z},\vec{h})$ is $F_{\vec x, \vec z, \vec h}$. 
		Let $M$ be the transpose of $H^{\vec{a}}$. 
		Then, $\rk H^{\vec a} = \rk M = \dim\langle M\rangle$, the dimension of the span of $M$.
		Recall that 
		\[
		F_{\vec x, \vec z, \vec h}=\phi_s\bigl(W_{\vec x}(\vec z)\bigr)_{\vec{h}} = \dig_{s-1}\bigl(W_{\vec x}(\vec z)\bigr) + \sum_{j=0}^{s-2} h_j \cdot \dig_{j}\bigl(W_{\vec x}(\vec z)\bigr). 
		\]
		If $\vec{h} = \vec{0}$, then $F_{\vec x, \vec z, \vec h} = \dig_{s-1}(W_{\vec x}(\vec z))$. 
		If $\vec{h} = \vec{e}_j$, then $F_{\vec x, \vec z, \vec h} = \dig_{s-1}(W_{\vec x}(\vec z)) + \dig_{j}(W_{\vec x}(\vec z))$. 
		Hence, every column $F_{\vec z,\vec h}$ of $H^{\vec{a}}$ is a combination of $\dig_j(W_{\vec{z}})$, so $\langle M\rangle\subseteq V_s$. 
		Conversely, every $\dig_j(W_{\vec{z}})$ can be expressed by the sum of the columns $F_{\vec{z}, \vec{h}}$ of $H^{\vec{a}}$, i.e., $V_s\subseteq\langle M\rangle$. 
		Both inclusions give $\langle M\rangle = V_s$.
	\end{proof}
	
	Put
	\[
	\mathsf{L}_{\vec{z}}(\vec x)=b+\sum_{i=1}^{m-1}2^{q_i}x_i\zeta_i,\qquad \vec x=(b,x_1,\dots,x_{m-1}),
	\]
	where $\zeta_i$ is the normalized coordinate \eqref{eq:zeta} of the column $\vec z$. Thus $\mathsf L_{\vec z}(\vec x)$ is the entry \eqref{eq:W} of the codeword $\vec u(\vec x)$ at the column $\vec z$. 
	
	\subsection{Boolean functions}\label{sec:anf}
	The message set $\mathcal M=\Z_{2^s}\times\prod_{i=1}^{m-1}\Z_{2^{s-q_i}}$ is a product of boxes, and a \emph{message function} is a map $f:\mathcal M\to\F$.  Writing every entry in base two, $b=\sum_{r=0}^{s-1}b_r2^r$ and $x_i=\sum_{r=0}^{s-q_i-1}a_{i,r}2^r$, the \emph{variables} of $f$ are the bits $b_0,\dots,b_{s-1}$ and the bits $a_{i,r}$ with $0\le r<s-q_i$. The normalized coordinates $\zeta_i$ of \eqref{eq:zeta} are parameters.
	
	In the bits $u_0,\dots,u_{n-1}$, every Boolean function has a unique \emph{algebraic normal form} (ANF)
	\[
	f=\sum_{S\subseteq\{0,\dots,n-1\}}c_S\prod_{\nu\in S}u_\nu,\qquad c_S\in\F,
	\]
	because the $2^n$ squarefree monomials are a basis of the $\F$-space of functions.  
	A \emph{monomial} is thus a squarefree product $g=\prod_\nu u_\nu^{e_\nu}$, and its \emph{support} is the set of factors $G(g)=\{i:\exists r,\ e_{(i,r)}=1\}$ whose bits occur in $g$. We also write $\mathsf{supp}(g)=G(g)$.
	The sum of the terms $c_g g$ with $G(g)=S$ is the \emph{component of support $S$} of $f$.
	
	Let $f$ be a message function and, for $1\le i<m$, let $E_i$ be the operator that sets the coefficient $x_i$ to zero:
	\[
	(E_if)(b,x_1,\dots,x_{m-1})=f(b,x_1,\dots,x_{i-1},0,x_{i+1},\dots,x_{m-1}).
	\]
	Then $E_i^2=E_i$ and $E_iE_j=E_jE_i$.  For $U\subseteq\{1,\dots,m-1\}$ put $E_U=\prod_{i\in U}E_i$. Thus $E_U$ sets to zero all the coefficients $x_i$ with $i\in U$, i.e.\ all their bits, and $E_\varnothing=\mathrm{id}$.
	
	It is easy to check that 
	\[
	E_U g = \begin{cases} 0,&U\cap G(g)\ne\varnothing,\\ g,&U\cap G(g)=\varnothing,\end{cases}
	\]
	since setting the coefficient of a factor to zero kills exactly the monomials that use it.  
	Hence $E_U$ keeps the monomials whose support is disjoint from $U$.

	Let $L_j=\dig_j\circ \mathsf{L}_{\vec{z}}:\mathcal M\to\F$, so that $L_j(\vec x)=\dig_j(\mathsf{L}_{\vec z}(\vec{x}))$. 
	Then $E_i\circ L_j$ is well defined and
	\[
	E_i\circ L_j=E_i\bigl(\dig_j(\mathsf{L}_{\vec{z}}(\vec x))\bigr)=\dig_j\bigl(E_i\mathsf{L}_{\vec{z}}(\vec x)\bigr)
	=\dig_j\Bigl(b+\sum_{\ell\ne i}2^{q_\ell}x_\ell\zeta_\ell\Bigr). 
	\]

	\emph{The components of support.}  The projection onto the monomials of support exactly $S$ is
	\begin{equation}\label{eq:PiS}
		\Pi_S=\sum_{U\subseteq S}E_{U^{\rm c}}=\sum_{T\supseteq S^{\rm c}}E_T,
		\qquad\text{equivalently}\qquad \Pi_S=\sum_{T\subseteq S}(-1)^{|S|-|T|}E_{T^{\rm c}},
	\end{equation}
	$S^{\rm c}$ being the complement in the set of all nonconstant factors (the signs are redundant in
	$\F$).  Indeed, for a monomial $g$ of support $G$,
	\[
	\Pi_S g=\#\{U:G\subseteq U\subseteq S\}\,g=2^{|S|-|G|}g=
	\begin{cases}
		g, & G = S, \\ 0, & G \neq S.
	\end{cases}
	\]
	because the number of $U$ with $G\subseteq U\subseteq S$ is $2^{|S|-|G|}$, which is even as soon as $G$ is a proper subset of $S$.  Consequently
	\[
	\Pi_S^2=\Pi_S,\qquad \Pi_S\Pi_{S'}=0\ \ (S\ne S'),\qquad \sum_S\Pi_S=\mathrm{id},
	\]
	so the components of support $S$ of a message function $f$ are the functions $\Pi_Sf$, and
	$f=\sum_S\Pi_Sf$.

	\subsection{rank}
	
	For a column $\vec z\in\mathcal G$ let $\zeta_i$, $1\le i<m$, be its normalized coordinates \eqref{eq:zeta}. 
	For a subset $S \subseteq \{1,\dots,m-1\}$, a message $\vec{x} = (b,x_1,\dots,x_{m-1})$, and a digit $j<s$, define the following function
	\[
	D_{j,S,\zeta_S}(\vec{x}) = \sum_{U\subseteq S}E_{U^{\rm c}} \dig_j(\mathsf{L}_{\vec{z}}(\vec x))
	=\sum_{U\subseteq S}\dig_j\Bigl(b+\sum_{i\in U}2^{q_i}x_i\zeta_i\Bigr)
	=\Pi_S\dig_j(\mathsf{L}_{\vec{z}}(\vec x)).
	\]
	
	\begin{proposition}
		The function $D_{j,S,\zeta_S}$ is a Boolean function from $\mathcal{M}$ to $\mathbb{F}_2$, and $(D_{j,S,\zeta_S}(\vec{x}))_{\vec{x} \in \mathcal{M}}\in V_s$. 
	\end{proposition}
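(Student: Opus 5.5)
The plan is to show that each summand in the defining expression
\[
D_{j,S,\zeta_S}=\sum_{U\subseteq S}\dig_j\Bigl(b+\sum_{i\in U}2^{q_i}x_i\zeta_i\Bigr)
\]
is itself the digit function of an actual column of $A^{\vec a}$. Then $D_{j,S,\zeta_S}$ is an $\F$-linear combination of generators of $V_s$, and Lemma~\ref{lem:digitspan} gives the claim. That $D_{j,S,\zeta_S}$ is a map $\mathcal M\to\F$ is immediate: each summand is a binary digit of an element of $\Z_{2^s}$, and sums are taken in $\F$.

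The key step is to identify $E_{U^{\rm c}}\dig_j(\mathsf L_{\vec z})$ with $\dig_j(W_{\vec z_U})$ for a suitable column $\vec z_U$. Fix the column $\vec z=(1,z_1,\dots,z_{m-1})\in\mathcal G$ with normalized coordinates $\zeta_i$. For $U\subseteq S$, define $\vec z_U$ by keeping $z_i$ for $i\in U$ and replacing $z_i$ by $0$ for $i\notin U$, with the leading entry $1$ kept.

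\begin{itemize}
\item Since $0\in T_{q_i+1}$ for every $i$, the vector $\vec z_U$ lies in $\{1\}\times T_1^{a_1-1}\times\cdots\times T_s^{a_s}=\mathcal G$, so it is a genuine column of $A^{\vec a}$.
\item By \eqref{eq:W}, $W_{\vec x}(\vec z_U)=b+\sum_{i\in U}x_iz_i=b+\sum_{i\in U}2^{q_i}x_i\zeta_i$.
\item On the other hand, $E_{U^{\rm c}}\mathsf L_{\vec z}(\vec x)$ is obtained by setting $x_i=0$ for $i\notin U$, which gives the same expression. Using the identity $E_i\circ L_j=\dig_j(E_i\mathsf L_{\vec z})$ already noted, this yields $E_{U^{\rm c}}\dig_j(\mathsf L_{\vec z})=\dig_j(W_{\vec z_U})$ as functions on $\mathcal M$.
\end{itemize}

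Summing over $U\subseteq S$ then gives $D_{j,S,\zeta_S}=\sum_{U\subseteq S}\dig_j(W_{\vec z_U})\in V_s$. One small point needs care. Formally the substitution $x_i\mapsto 0$ should be made in the integer expression $x_iz_i$ with $x_i\in\Z_{2^{s-q_i}}$. Here the product $x_iz_i=2^{q_i}x_i\zeta_i$ is well defined in $\Z_{2^s}$, because $2^{q_i}$ kills the ambiguity of $x_i$ modulo $2^{s-q_i}$. So evaluating at $x_i=0$ is consistent and there is no issue.

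No step is a real obstacle. The only thing to verify is that zeroing coordinates keeps us inside $\mathcal G$, which holds because every $T_i$ is an ideal and so contains $0$, while the leading $1$ is untouched. The notation $D_{j,S,\zeta_S}$ suggests the function depends only on $\zeta_S$, and indeed the columns $\vec z_U$ depend only on the entries $z_i$ with $i\in S$.
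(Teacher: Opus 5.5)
Your proof is correct and follows the paper's argument. Each summand $E_{U^{\rm c}}\dig_j(\mathsf L_{\vec z})$ is the digit function of the column obtained by zeroing the entries outside $U$, and that column stays in $\mathcal G$ because every $T_i$ contains $0$; your use of $E_{U^{\rm c}}$ actually matches the definition of $E_U$ more faithfully than the paper's proof, which writes $E_U$ for the operator zeroing the factors outside $U$.
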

	\begin{proof}
		Every summand of $D_{j,S,\zeta_S}$ is of the form $E_U\dig_j(\mathsf L_{\vec z})$ with $U\subseteq S$, so it suffices to show that $V_s$ is stable under each projection $E_U$.  Let $\zeta^{(U)}$ be the coordinate tuple that agrees with $\zeta$ on $U$ and is zero elsewhere. It belongs to the canonical coordinate set, since the entries of a column are allowed to vanish.  As $E_U$ sets to zero the coefficients $x_i$ with $i\notin U$, for every message
		\[
		E_U\dig_j(\mathsf L_{\vec z}(\vec x))=\dig_j\Bigl(b+\sum_{i\in U}2^{q_i}x_i\zeta_i\Bigr)=\dig_j(\mathsf L_{\zeta^{(U)}}(\vec x)),
		\]
		which is the generator of $V_s$ attached to the coordinate $\zeta^{(U)}$.  Hence $E_U\dig_j(\mathsf L_{\vec z})\in V_s$, and so is every sum $\Pi_S=\sum_{U\subseteq S}E_{U^{\rm c}}$ of such projections.
	\end{proof}
	
	Since $D_{j,S,\zeta_S}$ is a Boolean function of the message bits $b_0,\dots,b_{s-1}$ and $a_{i,r}$ with $0\le r<s-q_i$ and $i\in S$, it has a unique algebraic normal form
	\[
	D_{j,S,\zeta_S}(\vec{x}) = \sum_{J\subseteq \{0,\dots,s-1\}}\ \sum_{(B_i)_{i\in S},\ B_i\subseteq\{0,\dots,s-q_i-1\}} c_{J,(B_i)} \prod_{r\in J} b_r \prod_{i\in S}\prod_{r\in B_i} a_{i,r}.
	\]
	For a monomial $g$ of $D_{j,S,\zeta_S}$ the support $\mathsf{supp}(g)$ is defined as in \S\ref{sec:anf}: it is the set of factors $i\in S$ such that some bit $a_{i,r}$ occurs in $g$.
	
	\begin{lemma}\label{lem:directsum}
		Let $\mathcal W_s(S)$ be the span of the column vectors $(D_{j,S,\zeta_S}(\vec{x}))_{\vec{x} \in \mathcal{M}}$ as $j<s$ and $\zeta_S$ vary.  
		Then 
		\begin{enumerate}[label=$(\arabic*)$]
			\item $\mathsf{supp}(g) = S$ for every monomial $g$ of $D_{j,S,\zeta_S}(\vec{x})$
			\item $V_s=\bigoplus_S\mathcal W_s(S)$ is a direct sum of $\mathcal{W}(S)$ 
			\item $\mathcal W_s(\varnothing)$ is spanned by the $s$ bits of $b$
		\end{enumerate}
	\end{lemma}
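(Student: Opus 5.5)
For item (1) I will use the identity $D_{j,S,\zeta_S}=\Pi_S\dig_j(\mathsf L_{\vec z})$ and the computation after \eqref{eq:PiS}. That computation shows that $\Pi_S$ kills every monomial whose support differs from $S$ and fixes the others. Write the ANF of $\dig_j(\mathsf L_{\vec z})$ as $\sum_g c_g g$ and apply $\Pi_S$ termwise; this uses linearity of $\Pi_S$, since each $E_T$ is an evaluation substitution and hence linear. Only the terms with $G(g)=S$ survive. By uniqueness of the ANF, these are exactly the monomials of $D_{j,S,\zeta_S}$, so each has support $S$.

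For item (2) I will first show that $V_s=\sum_S\mathcal W_s(S)$. Each generator $\dig_j(W_{\vec z})$ of $V_s$ from Lemma~\ref{lem:digitspan} equals $\dig_j(\mathsf L_{\vec z})$. Using $\sum_S\Pi_S=\mathrm{id}$ gives
\[
\dig_j(\mathsf L_{\vec z})=\sum_S\Pi_S\dig_j(\mathsf L_{\vec z})=\sum_S D_{j,S,\zeta_S},
\]
where $\zeta_S$ is the restriction of the normalized coordinates of $\vec z$ to $S$. Hence $V_s\subseteq\sum_S\mathcal W_s(S)$. The reverse inclusion is the preceding Proposition, which says each $D_{j,S,\zeta_S}$ lies in $V_s$. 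For directness, item (1) shows that every element of $\mathcal W_s(S)$ is a combination of monomials of support exactly $S$. For distinct $S$ these monomial sets are disjoint subsets of the ANF basis. So a relation $\sum_S f_S=0$ with $f_S\in\mathcal W_s(S)$ forces each $f_S=0$ by uniqueness of the ANF. Equivalently, apply $\Pi_{S'}$, which acts as the identity on $\mathcal W_s(S')$ and as zero on the other summands.

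For item (3), taking $S=\varnothing$ gives $D_{j,\varnothing}=E_{\{1,\dots,m-1\}}\dig_j(\mathsf L_{\vec z})=\dig_j(b)=b_j$, independent of $\zeta$. Letting $j$ range over $0,\dots,s-1$, the space $\mathcal W_s(\varnothing)$ is therefore spanned by $b_0,\dots,b_{s-1}$. These $s$ bits are distinct monomials, so they are linearly independent.

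I expect no real obstacle here. The only point needing care is checking that $\Pi_S$ acts monomialwise on the ANF: $E_i$ maps a monomial either to itself or to $0$, and this works because setting $x_i=0$ sets all bits $a_{i,r}$ to $0$ and leaves the other variables untouched.
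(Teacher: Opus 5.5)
Your proposal is correct and follows the paper's proof essentially step by step. You use the monomialwise action of $\Pi_S$ on the ANF for (1), the identity $\sum_S\Pi_S=\mathrm{id}$ together with the disjointness of the monomial sets of different supports for (2), and $\Pi_\varnothing\dig_j(\mathsf L_{\vec z})=b_j$ for (3). You are in fact slightly more complete than the written proof, since you explicitly invoke the preceding Proposition for the inclusion $\mathcal W_s(S)\subseteq V_s$, which the paper leaves implicit when it says the spaces $\mathcal W_s(S)$ span $V_s$.
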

	\begin{proof}
		Let $g$ be a monomial with support $G=\mathsf{supp}(g)$ and let $U$ be a set of factors.  Then
		\[
		E_{U^{\rm c}}g=\begin{cases}g,& G\subseteq U,\\ 0,& G\not\subseteq U,\end{cases}
		\]
		because $E_{U^{\rm c}}$ sets to zero exactly the coefficients of the factors outside $U$.  Hence
		\[
		\Pi_Sg=\sum_{U\subseteq S}E_{U^{\rm c}}g=\#\{U: G\subseteq U\subseteq S\}\,g .
		\]
		If $G\not\subseteq S$ the set of such $U$ is empty and $\Pi_Sg=0$.  If $G\subseteq S$ it has $2^{|S|-|G|}$ elements, because the part of $U$ inside $S\setminus G$ is free, so
		\[
		\Pi_Sg=2^{|S|-|G|}g\qquad(G\subseteq S),
		\]
		which is $g$ when $G=S$ and $0$ when $G$ is a proper subset of $S$, since then $2^{|S|-|G|}$ is even.  This proves (1).

		For (2): if $f\in\mathcal W_s(S)$, every monomial of $f$ has support $S$. Hence for $S\ne S'$ the monomials occurring in $\mathcal W_s(S)$ and in $\mathcal W_s(S')$ are distinct elements of the monomial basis, so the sum $\sum_S\mathcal W_s(S)$ is direct.  Moreover $\sum_S\Pi_S=\mathrm{id}$, so
		\[
		\dig_j(\mathsf L_{\vec z})=\sum_{S}\Pi_S\dig_j(\mathsf L_{\vec z})=\sum_S D_{j,S,\zeta_S},
		\]
		and therefore the spaces $\mathcal W_s(S)$ span $V_s$. This gives (2).

		For (3): $\Pi_\varnothing$ keeps the monomials of support $\varnothing$, i.e.\ those in the bits of $b$ alone. Applied to $\dig_j(\mathsf L_{\vec z})$ it gives $\dig_j(b)$, so $\mathcal W_s(\varnothing)$ is spanned by $b_0,\dots,b_{s-1}$.
	\end{proof}
	
	\begin{lemma}\label{lem:coeffrep}
		Let $T$ be an indeterminate and put
		\[
		B_s(T)=\prod_{r=0}^{s-1}(1+b_rT^{2^r}),\qquad
		A_{q,\zeta}(x;T)=\prod_{r=0}^{s-q-1}\Bigl(1+\alpha_r\bigl((1+T^{2^{q+r}})^{\zeta}-1\bigr)\Bigr),
		\]
		where $\alpha_r=\dig_r(x)$ are the bits of the coefficient $x\in\Z_{2^{s-q}}$ and $\zeta\in\Z_{2^{s-q}}$ is a column value.  For every message and every column, with $\tilde L_{\vec z}(\vec x)$ a nonnegative integer representative of $\mathsf L_{\vec z}(\vec x)\bmod 2^s$,
		\[
		(1+T)^{\tilde L_{\vec z}(\vec x)}=B_s(T)\prod_{i=1}^{m-1}A_{q_i,\zeta_i}(x_i;T).
		\]
		Moreover, for $0\le j<s$,
		\[
		D_{j,S,\zeta_S}=[T^{2^j}]\,B_s(T)\prod_{i\in S}\bigl(A_{q_i,\zeta_i}(x_i;T)-1\bigr),
		\]
		and $\mathcal W_s(S)=0$ whenever $\sum_{i\in S}2^{q_i}>2^{s-1}$.
	\end{lemma}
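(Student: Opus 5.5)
The plan is to work throughout in the ring $R=\F[T]/(T^{2^s})$ and to rest everything on one classical fact. For an integer $N=\sum_j N_j2^j\ge0$, Lucas' theorem gives
\[
(1+T)^{N}=\prod_j(1+T^{2^j})^{N_j}=\prod_j(1+N_jT^{2^j}).
\]
Hence modulo $T^{2^s}$ the power $(1+T)^N$ depends only on $N\bmod 2^s$, since $(1+T)^{2^s}=1+T^{2^s}\equiv1$. Moreover, for $0\le j<s$ we have $[T^{2^j}](1+T)^N=\dig_j(N)$: expanding the product, the monomial $T^{2^j}$ arises only from the single factor $j$, by uniqueness of binary expansions. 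All identities below are meant in $R$, which suffices because only the coefficients of $T^{2^j}$ with $j<s$ are extracted.

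For the first identity, write $\tilde L=b+\sum_i 2^{q_i}\zeta_i\sum_r\alpha_{i,r}2^r$, using nonnegative representatives. By multiplicativity,
\[
(1+T)^{\tilde L}=(1+T)^b\prod_i\prod_r\bigl((1+T)^{2^{q_i+r}\zeta_i}\bigr)^{\alpha_{i,r}}.
\]
First, $(1+T)^b=B_s(T)$ by Lucas. Second, $(1+T)^{2^{q+r}\zeta}=(1+T^{2^{q+r}})^{\zeta}$ by Frobenius. Third, for a bit $\alpha$ and any $P$ we have $P^\alpha=1+\alpha(P-1)$. Together these give the product formula. The dependence on $\zeta\in\Z_{2^{s-q}}$ only through its residue is consistent: changing $\zeta$ by $2^{s-q}$ multiplies $(1+T^{2^{q+r}})^\zeta$ by $(1+T^{2^{s+r}})\equiv1$.

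For the formula for $D_{j,S,\zeta_S}$, recall that $D_{j,S,\zeta_S}=\sum_{U\subseteq S}\dig_j\bigl(b+\sum_{i\in U}2^{q_i}x_i\zeta_i\bigr)$. By the coefficient fact, this equals
\[
[T^{2^j}]\sum_{U\subseteq S}B_s\prod_{i\in U}A_i,
\]
where $A_i=A_{q_i,\zeta_i}(x_i;T)$, since setting $x_i=0$ makes $A_i=1$. Over $\F$ we have $\sum_{U\subseteq S}\prod_{i\in U}A_i=\prod_{i\in S}(1+A_i)=\prod_{i\in S}(A_i-1)$, and this yields the stated formula.

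For the vanishing claim, observe that $A_i-1$ is divisible by $T^{2^{q_i}}$. Indeed each factor $(1+T^{2^{q_i+r}})^{\zeta_i}-1$ is divisible by $T^{2^{q_i+r}}$, and hence by $T^{2^{q_i}}$; so $A_i-1$, a polynomial in these terms without constant term, is divisible by $T^{2^{q_i}}$. Consequently $\prod_{i\in S}(A_i-1)$ has $T$-adic valuation at least $\sum_{i\in S}2^{q_i}$. Since $B_s$ has constant term $1$, the coefficient of $T^{2^j}$ vanishes as soon as $\sum_{i\in S}2^{q_i}>2^j$. Taking $j\le s-1$, the hypothesis $\sum_{i\in S}2^{q_i}>2^{s-1}$ kills every generator $D_{j,S,\zeta_S}$, so $\mathcal W_s(S)=0$.

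The only delicate point is the passage from $\Z_{2^s}$ to integer representatives. The exponents must be handled as integers, while the identities hold only modulo $T^{2^s}$; I would state the Lucas and coefficient-extraction lemma once with this in mind.
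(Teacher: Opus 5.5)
Your proof is correct and follows essentially the same route as the paper's: factoring $(1+T)^{\tilde L}$ modulo $T^{2^s}$ via Lucas/Frobenius, the identity $P^{\alpha}=1+\alpha(P-1)$ for a bit $\alpha$, the expansion $\sum_{U\subseteq S}\prod_{i\in U}A_i=\prod_{i\in S}(A_i-1)$ over $\F$, and the $T$-adic valuation bound $\sum_{i\in S}2^{q_i}$. The only cosmetic difference is that you check the formula for $D_{j,S,\zeta_S}$ pointwise at each message, using $A_i=1$ when $x_i=0$, whereas the paper phrases the same step through the coefficientwise action of the operators $E_{U^{\rm c}}$ on $(1+T)^{\tilde L}$.
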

	\begin{proof}
		All identities are read in the ring of Boolean functions of the message bits with the formal variable $T$, truncated at $T^{2^s}$. We have the following steps. 
		\begin{enumerate}[label=(\alph*),leftmargin=*]
			\item For an integer $u\ge0$,
			\[
			(1+T)^u=\prod_{r:\ \dig_r(u)=1}(1+T^{2^r}),
			\]
			because $(1+T)^{2^r}=1+T^{2^r}$ in characteristic two. As the integers $\sum_{r\in J}2^r$ are distinct for distinct $J$, the coefficient of $T^{2^j}$ is $1$ exactly when $\dig_j(u)=1$, i.e.\ $[T^{2^j}](1+T)^u=\dig_j(u)$. 
			
			\item For a coefficient $x=\sum_{r<s-q}\alpha_r2^r$ and a column value $\zeta$,
			\[
			(1+T)^{2^qx\zeta}=\prod_{r=0}^{s-q-1}(1+T^{2^{q+r}})^{\alpha_r\zeta}
			=\prod_{r=0}^{s-q-1}\Bigl[1+\alpha_r\bigl((1+T^{2^{q+r}})^{\zeta}-1\bigr)\Bigr]=A_{q,\zeta}(x;T),
			\]
			the second equality using $\alpha_r\in\{0,1\}$. The product is finite because $\alpha_r=0$ for $r\ge s-q$.  For the all-one row, i.e.\ $q=0$, $\zeta=1$ and $x=b$, this gives $B_s(T)$.
			
			\item Let $\tilde b$ and $\tilde x_i$ be the integer representatives of $b$ and $x_i$ modulo $2^s$ and of $x_i$ modulo $2^{s-q_i}$, and $\tilde L=\tilde b+\sum_i2^{q_i}\tilde x_i\zeta_i$.  Since $(1+T)^{2^s}=1+T^{2^s}$ has no term of degree below $2^s$, the coefficients of degree $<2^s$ of $(1+T)^{\tilde L}$ depend only on $\tilde L\bmod 2^s$. Hence
			\[
			(1+T)^{\tilde L_{\vec z}(\vec x)}=(1+T)^{\tilde b}\prod_{i=1}^{m-1}(1+T)^{2^{q_i}\tilde x_i\zeta_i}=B_s(T)\prod_{i=1}^{m-1}A_{q_i,\zeta_i}(x_i;T).
			\]
			
			\item   Write $(1+T)^{\tilde L}=\sum_dc_d(\vec x)T^d$ with $c_d$ a Boolean function of the message.  For a set $V$ of factors, $E_Vc_d(\vec x)=c_d(\vec x_V)$, hence
			\[
			E_V\bigl((1+T)^{\tilde L}\bigr)=\sum_d(E_Vc_d)T^d=\sum_dc_d(\vec x_V)T^d=(1+T)^{\tilde L(\vec x_V)}=(1+T)^{E_V\tilde L}.
			\]
			
			\item Since $A-1=A+1$ in $\F$,
			\[
			\prod_{i\in S}\bigl(A_{q_i,\zeta_i}(x_i;T)-1\bigr)=\sum_{U\subseteq S}\ \prod_{i\in U}A_{q_i,\zeta_i}(x_i;T),
			\]
			while, by (d) and by the vanishing of the factors outside $U$,
			\[
			E_{U^{\rm c}}\bigl((1+T)^{\tilde L}\bigr)=(1+T)^{\tilde b}\prod_{i\in U}A_{q_i,\zeta_i}(x_i;T).
			\]
			Multiplying by $B_s(T)=(1+T)^{\tilde b}$ and extracting $[T^{2^j}]$, which commutes with the coefficientwise action of $E_{U^{\rm c}}$, gives
			\[
			[T^{2^j}]\,B_s(T)\prod_{i\in S}\bigl(A_{q_i,\zeta_i}(x_i;T)-1\bigr)=\sum_{U\subseteq S}E_{U^{\rm c}}\dig_j(\mathsf L_{\vec z})=D_{j,S,\zeta_S},
			\]
			by (a) and the definition of $D_{j,S,\zeta_S}$. 
			
			\item The polynomial $A_{q,\zeta}(x;T)-1$ has no term of degree below $2^q$, since its lowest power of $T$ occurs in $(1+T^{2^q})^{\zeta}-1$. Hence a product over the factors of $S$ has no term of degree below $\sum_{i\in S}2^{q_i}$.  If this exceeds $2^{s-1}$ there is no coefficient of $T^{2^j}$ with $j<s$, so $D_{j,S,\zeta_S}=0$ and $\mathcal W_s(S)=0$.
		\end{enumerate}
		We complete the proof. 
	\end{proof}
	
	Recall that the row $w_{i+1}$ has level $q_i+1$, hence order $2^{s-q_i}$.  
	For a given $S \subseteq \{1,\dots,m-1\}$, define the multiplicity vector $\vec{k} = \vec{k}(S) = (k_0, k_1, \dots, k_{s-1})$ as follows  
	\[
	k_\ell := |\{ i \in S: q_i = \ell \}|, \qquad 0 \leq \ell < s. 
	\]
	
	\begin{lemma}\label{lem:equaldim}
		Let $S,S'$ be subsets of $\{1,\dots,m-1\}$ with $\vec{k} = \vec{k}(S) = \vec{k}(S')$.  
		Then $\dim \mathcal W_s(S)=\dim \mathcal W_s(S')$.
	\end{lemma}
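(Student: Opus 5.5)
The plan is to exhibit an explicit linear isomorphism between $\mathcal W_s(S)$ and $\mathcal W_s(S')$ induced by relabelling variables. Since $\vec k(S)=\vec k(S')$, the sets $S$ and $S'$ contain the same number of factors of each valuation offset $\ell$. We can therefore choose a bijection $\sigma:S\to S'$ with $q_{\sigma(i)}=q_i$ for every $i\in S$, matching factors level by level.

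By Lemma~\ref{lem:directsum}(1), every function in $\mathcal W_s(S)$ depends only on the bits $b_0,\dots,b_{s-1}$ and $a_{i,r}$ with $i\in S$. The same holds for $\mathcal W_s(S')$ with $i\in S'$. Let $\mathcal F(S)$ be the space of Boolean functions in these variables, viewed as functions on $\mathcal M$. Define $\Sigma:\mathcal F(S)\to\mathcal F(S')$ by fixing the bits of $b$ and replacing each $a_{i,r}$ with $a_{\sigma(i),r}$. This is well defined because $q_{\sigma(i)}=q_i$ gives the same range $0\le r<s-q_i$ of bit indices. On ANF it is a bijection of monomial bases, so $\Sigma$ is a linear isomorphism. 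Equivalently, $(\Sigma f)(\vec x)=f(\vec x')$, where $\vec x'$ has $b'=b$, $x'_i=x_{\sigma(i)}$ for $i\in S$, and the other coordinates arbitrary.

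The key step is to show $\Sigma D_{j,S,\zeta_S}=D_{j,S',\zeta'_{S'}}$, where $\zeta'_{\sigma(i)}=\zeta_i$. This follows most cleanly from the explicit formula
\[
D_{j,S,\zeta_S}(\vec x)=\sum_{U\subseteq S}\dig_j\Bigl(b+\sum_{i\in U}2^{q_i}x_i\zeta_i\Bigr).
\]
Substituting and reindexing $U\mapsto\sigma(U)$ gives exactly the formula for $D_{j,S',\zeta'_{S'}}$, because $2^{q_i}=2^{q_{\sigma(i)}}$. Alternatively, one can read it off from the coefficient representation of Lemma~\ref{lem:coeffrep}, since $A_{q_i,\zeta_i}(x_i;T)$ depends only on $(q_i,\zeta_i)$ and the bits of $x_i$.

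The step needing care is admissibility: $\zeta'$ must be a legitimate column value at each factor $\sigma(i)$. The normalized coordinate $\zeta_i$ ranges over $\Z_{2^{s-q_i}}$ for $i\in S$, and this range depends only on $q_i$. The entries of $\zeta$ outside $S$ are irrelevant to $D_{j,S,\zeta_S}$. As $\zeta_S$ runs over all admissible tuples on $S$, $\zeta'_{S'}$ therefore runs over all admissible tuples on $S'$. Hence $\Sigma$ maps the spanning set of $\mathcal W_s(S)$ bijectively onto that of $\mathcal W_s(S')$.

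Since $\Sigma$ is an injective linear map on $\mathcal F(S)\supseteq\mathcal W_s(S)$, we get $\Sigma(\mathcal W_s(S))=\mathcal W_s(S')$. This yields $\dim\mathcal W_s(S)=\dim\mathcal W_s(S')$.
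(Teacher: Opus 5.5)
Your proposal is correct and takes essentially the same approach as the paper: a level-preserving bijection $\sigma$ induces an invertible relabelling substitution on Boolean functions, which sends each generator $D_{j,S,\zeta_S}$ to a generator $D_{j,S',\zeta'_{S'}}$ and therefore maps $\mathcal W_s(S)$ onto $\mathcal W_s(S')$. You also make precise two points the paper leaves implicit: you restrict $\sigma$ to a map $S\to S'$ and work inside $\mathcal F(S)$, and you check explicitly that $\zeta'_{S'}$ runs over all admissible tuples.
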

	\begin{proof}
		Let $\sigma$ be a bijection of the set of nonconstant factors that maps each factor of $S$ to a factor of $S'$ of the same level. It exists because the two supports have the same number of factors at each level.  
		On the message, $\sigma$ relabels the coefficients within each level, which is the invertible linear substitution $T_\sigma$ of the space of Boolean functions given by $x_i\mapsto x_{\sigma(i)}$.  
		For a factor of level $q+1$ both its bits and its level are preserved, so $T_\sigma$ sends $D_{j,S,\zeta_S}$ to $D_{j,S',\zeta_{S'}}$: in that definition only the levels of the factors of $S$, and the products $2^{q_i}$, occur, and these are the same for the corresponding factor of $S'$.  
		Hence $T_\sigma$ maps $\mathcal W_s(S)$ onto $\mathcal W_s(S')$, and since $T_\sigma$ is invertible the two spaces have the same dimension.
	\end{proof}
	
	\begin{lemma}\label{lem:universal}
		Let $\vec k$ be feasible and let $S\subseteq\{1,\dots,m-1\}$ have $\vec k(S)=\vec k$.  Then $\dim\mathcal W_s(S)$ depends only on $s$ and $\vec k$, and neither on the ambient type $\vec a$ nor on the length.
	\end{lemma}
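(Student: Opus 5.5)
The plan is to show that $\mathcal W_s(S)$ is, up to an invertible change of ambient space, the same subspace for every type containing a support of multiplicity vector $\vec k$. We pass from Boolean functions on the full message set $\mathcal M$ to Boolean functions in only the variables that actually occur. By Lemma~\ref{lem:coeffrep}, every generator
\[
D_{j,S,\zeta_S}=[T^{2^j}]\,B_s(T)\prod_{i\in S}\bigl(A_{q_i,\zeta_i}(x_i;T)-1\bigr)
\]
is a polynomial in the bits $b_0,\dots,b_{s-1}$ and $a_{i,r}$ with $i\in S$ and $0\le r<s-q_i$. Its coefficients are determined by $s$, the offsets $q_i$ ($i\in S$) and the column values $\zeta_i\in\Z_{2^{s-q_i}}$. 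Nothing in this expression refers to the factors outside $S$, to $m$, or to the other entries of $\vec a$.

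First, I introduce the \emph{local message set} $\mathcal M_S=\Z_{2^s}\times\prod_{i\in S}\Z_{2^{s-q_i}}$ and the local space $\mathcal W^{\rm loc}_s(S)$. The latter is spanned by the same expressions $D_{j,S,\zeta_S}$, regarded as functions on $\mathcal M_S$. The embedding of functions on $\mathcal M_S$ into functions on $\mathcal M$, by composition with the projection $\mathcal M\to\mathcal M_S$, is injective and linear. It is also compatible with ANF, since a squarefree monomial in the local bits stays a distinct squarefree monomial in the full set of bits. It maps the local generators onto the generators of $\mathcal W_s(S)$, because the formula of Lemma~\ref{lem:coeffrep} is literally the same. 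One point needs checking: the set of admissible $\zeta_S$ is the full product $\prod_{i\in S}\Z_{2^{s-q_i}}$ for every ambient type, since by \eqref{eq:cols} the columns range over all of $\mathcal G^{\vec a}$ independently in each coordinate. Hence $\dim\mathcal W_s(S)=\dim\mathcal W^{\rm loc}_s(S)$.

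Second, I show that $\dim\mathcal W^{\rm loc}_s(S)$ depends only on $(s,\vec k)$. The data defining $\mathcal W^{\rm loc}_s(S)$ form the multiset $\{q_i:i\in S\}$, which is exactly $\vec k$, with $k_\ell$ factors at offset $\ell$. Any two supports with the same $\vec k$, possibly in different types, are related by a level-preserving bijection. As in Lemma~\ref{lem:equaldim}, this bijection induces an invertible relabelling of local bits carrying generators to generators. Thus one may fix a model support $S_{\vec k}$ with offsets listed in increasing order, and every $\mathcal W_s(S)$ with $\vec k(S)=\vec k$ has the dimension of this model. The length $2^t$ enters only through $\vec a$ via \eqref{eq:length}, so independence from the length follows. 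Feasibility of $\vec k$ is used only so that the statement is nonvacuous; infeasible $\vec k$ give $0$ by Lemma~\ref{lem:coeffrep}.

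The main obstacle is the first step. One must verify carefully that the ambient-type-dependent features do not leak into the dimension. These features are the range of the column values, the integer representatives modulo $2^{s-q_i}$ versus $2^s$, and the presence of other factors. The coefficient representation handles the last of these, since the factors outside $S$ drop out through the projection $\Pi_S$. The first needs \eqref{eq:cols}, and the second is covered by step (c) of the proof of Lemma~\ref{lem:coeffrep}. I would state the injectivity of the pull-back explicitly through the uniqueness of the ANF.
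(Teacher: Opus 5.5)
Your proposal is correct and follows essentially the same route as the paper. Both arguments pull back from the local message box $\Z_{2^s}\times\prod_{i\in S}\Z_{2^{s-q_i}}$ along the surjective projection $\pi_S$, whose injective pullback carries the local generators onto those of $\mathcal W_s(S)$. You also check explicitly that $\zeta_S$ ranges over the full product for every type, and that a level-preserving relabelling identifies the local spaces when $\vec k$ is the same; the paper's proof leaves both of these implicit.
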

	\begin{proof}
		Let $\mathcal M_{s,\vec k}=\Z_{2^s}\times\prod_{\ell=0}^{s-1}\Z_{2^{s-\ell}}^{k_\ell}$ be the local message box of the multiplicity vector and let $\pi_S$ be the projection of the ambient message box onto the coordinates $b$ and $x_i$, $i\in S$.  The functions $D_{j,S,\zeta_S}$ involve only these coordinates, so they are pullbacks $\pi_S^*\tilde D_{j,S,\zeta_S}$ of functions $\tilde D$ of the local box, and $D\mapsto\pi_S^*D$ is injective because $\pi_S$ is surjective.  An injective linear map preserves the dimension of every span, so $\dim\mathcal W_s(S)$ equals the dimension of the span of the $\tilde D_{j,S,\zeta_S}$, which depends only on $s$ and $\vec k$.  This also gives a meaning to $\rho_s(\vec k)$ for multiplicity vectors that do not occur in a given type.
	\end{proof}
	
	Let $\rho_s(\vec k) = \dim \mathcal W_s(S)$ for $S$ with $\vec{k} = \vec{k}(S)$.  
	It is clear that $\rho_s(\vec k)$ is well-defined and depends only on $s$ and $\vec k$, by Lemma \ref{lem:equaldim} and Lemma \ref{lem:universal}. 
	Using Lemma \ref{lem:directsum}, $\rho_s(\vec 0)=\dim \mathcal{W}(\varnothing) = s$.
	
	\begin{theorem}\label{thm:rankformula}
		For the type $\vec{a} = (a_1, \dots, a_s)$, define $\mu_0 = a_1-1$ and $\mu_\ell = a_{\ell+1}$ for $1 \leq \ell < s$.  Then
		\[
		\rk H^{\vec a}=\sum_{\substack{\vec k\in\Z_{\ge 0}^s\\ \sum_\ell 2^\ell k_\ell \le 2^{s-1}}}
		\rho_s(\vec k)\prod_{\ell=0}^{s-1}\binom{\mu_\ell}{k_\ell}. 
		\]
	\end{theorem}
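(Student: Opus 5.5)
The plan is to read the theorem off the direct sum decomposition of Lemma~\ref{lem:directsum} together with the dimension invariance of Lemmas~\ref{lem:equaldim} and~\ref{lem:universal}, and then to count subsets. By Lemma~\ref{lem:digitspan}, $\rk H^{\vec a}=\dim V_s$. By Lemma~\ref{lem:directsum}(2), $V_s=\bigoplus_{S\subseteq\{1,\dots,m-1\}}\mathcal W_s(S)$, so
\[
\rk H^{\vec a}=\sum_{S\subseteq\{1,\dots,m-1\}}\dim\mathcal W_s(S)=\sum_{S}\rho_s(\vec k(S)),
\]
where the second equality is the definition of $\rho_s$, which is well defined by Lemmas~\ref{lem:equaldim} and~\ref{lem:universal}.

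Next I group the subsets $S$ according to their multiplicity vector $\vec k(S)$. The nonconstant factors $i\in\{1,\dots,m-1\}$ are partitioned by their valuation offset $q_i$. Since $w_1$ is the all-one row of level $1$, the factors with $q_i=0$ are the remaining $a_1-1$ rows of level $1$, so their number is $\mu_0$. For $1\le\ell<s$, the factors with $q_i=\ell$ are the $a_{\ell+1}$ rows of level $\ell+1$, so their number is $\mu_\ell$. A subset $S$ with $\vec k(S)=\vec k$ is an independent choice of $k_\ell$ factors from each class. Hence the number of such $S$ is $\prod_\ell\binom{\mu_\ell}{k_\ell}$, which is automatically $0$ when some $k_\ell>\mu_\ell$. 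This gives
\[
\rk H^{\vec a}=\sum_{\vec k\in\Z_{\ge0}^s}\rho_s(\vec k)\prod_{\ell=0}^{s-1}\binom{\mu_\ell}{k_\ell}.
\]

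It remains to restrict the sum to feasible $\vec k$. For $\vec k(S)=\vec k$ we have $\sum_{i\in S}2^{q_i}=\sum_\ell 2^\ell k_\ell$. By the last assertion of Lemma~\ref{lem:coeffrep}, $\mathcal W_s(S)=0$ whenever this quantity exceeds $2^{s-1}$, so $\rho_s(\vec k)=0$ for every infeasible $\vec k$. Dropping those terms leaves exactly the stated finite sum.

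The only delicate point is the bookkeeping. For infeasible $\vec k$ the value $\rho_s(\vec k)$ is still meaningful through the local box of Lemma~\ref{lem:universal}, and it is zero there by the same degree argument. The sum over feasible $\vec k$ may also include vectors that do not occur for the given type. For those, the binomial product vanishes, so both possibilities are consistent with the formula as stated.
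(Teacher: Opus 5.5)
Your proposal is correct and follows essentially the same route as the paper. You get $\rk H^{\vec a}=\dim V_s=\sum_S\dim\mathcal W_s(S)$ from Lemmas~\ref{lem:digitspan} and~\ref{lem:directsum}, group the supports by $\vec k(S)$, count them as $\prod_\ell\binom{\mu_\ell}{k_\ell}$, and drop the infeasible $\vec k$ using the degree bound of Lemma~\ref{lem:coeffrep}; your remark on vectors that are infeasible or do not occur for the type makes explicit bookkeeping the paper leaves implicit.
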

	\begin{proof}
		By Lemma~\ref{lem:digitspan} the rank is $\dim V_s$, and by Lemma~\ref{lem:directsum},
		\[
		\rk H^{\vec a} = \dim V_s=\sum_{S}\dim \mathcal W_s(S),
		\]
		the sum running over all subset $S$ of the set $\{1,\dots,m-1\}$.  
		Grouping the sum over the supports according to their multiplicity vector $\vec k(S)$, and using Lemma~\ref{lem:coeffrep} and Lemma~\ref{lem:equaldim}, we get
		\begin{equation}
		\rk H^{\vec a} = \sum_{\vec{k}:w(\vec{k}) \leq 2^{s-1}} \rho_s(\vec{k})\cdot |\{ S:\vec{k}(S) = \vec{k} \}|,
		\end{equation}
		because $\dim\mathcal W_s(S)=\rho_s(\vec k(S))$ for every $S$ and it vanishes when $w(\vec k(S))>2^{s-1}$, so each support contributes exactly one summand $\rho_s(\vec k(S))$.
		By the definition of $\vec k(S)$ there are
		\[
		|\{ S:\vec{k}(S) = \vec{k} \}| = \prod_{\ell=0}^{s-1}\binom{\mu_\ell}{k_\ell}
		\]
		supports of multiplicity $\vec k$: one chooses $k_\ell$ of the $\mu_\ell$ factors of level $\ell+1$ for each $\ell$, and the choices at different levels are independent.
	\end{proof}
	
	\begin{remark}
		The computation is not uniformly inexpensive.  The number of feasible supports is
		\[
		N_s=\bigl|\{\vec k:\textstyle\sum_\ell2^\ell k_\ell\le2^{s-1}\}\bigr|=4,\ 10,\ 36,\ 202,\dots,
		\]
		as $s=2,3,4,5,\dots$.  The complexity grows with $s$, although it is finite for every fixed $s$.
	\end{remark}
	
	\section{Ranks over $\Z_4$, $\Z_8$, $\Z_{16}$ and $\Z_{32}$} \label{sec:ranks}
	
	Here we present the ranks over $\mathbb{Z}_4$, $\mathbb{Z}_8$, $\mathbb{Z}_{16}$, and $\mathbb{Z}_{32}$. 
	The coefficients $\rho_s(\vec{k})$ are computed by the Algorithm \ref{alg:algo}. 
	
	\subsection{Explicit ranks over $\Z_4$}
	
	The vectors $\vec{k}$ satisfying Theorem~\ref{thm:rankformula} are $(0,0),(1,0),(2,0),(0,1)$. 
	The related $\rho_2$ are $2,2,1,1$, respectively.  
	Hence, 
	\begin{equation}\label{eq:r2}
		\rk H^{a_1,a_2}=2+2(a_1-1)+\binom{a_1-1}{2}+a_2=2a_1+a_2+\binom{a_1-1}{2},
	\end{equation}
	which can also be seen in \cite{Z8}. 
	
	\subsection{Explicit ranks over $\Z_8$}
	
	For $s=3$ the rank is a binomial polynomial whose coefficients $\rho_3$ are listed in Table~\ref{tab:rho3}.  
	Carrying out the sum gives the closed formula of \cite{Z8},
	\begin{equation}\label{eq:r3}
		\rk H^{a,b,c}=\frac{a^4-2a^3+35a^2+14a}{24}+1+\frac b2\bigl(a^2+a+b+1\bigr)+c .
	\end{equation}
	For length $2^9$ the six nonlinear admissible types give the ranks $11,13,12,14,17,18$ in the order of Table~3 of \cite{CFV2019}. For $t=10$ the eight nonlinear admissible types give $12,14,17,13,15,18,19,24$, and for $t=11$ the ten of them give $13,15,18,14,16,19,23,20,25,32$.  
	The ranks of the $\Z_8$-linear codes of length $2^t$ for all $t\le17$ are tabulated in \cite{Z8}. 
	\begin{table}[ht]
		\centering
		\caption{The coefficients $\rho_3(\vec k)$}\label{tab:rho3}
		\begin{tabular}{ccc}
			\toprule
			$(k_1,k_2)$ & $k_0$ & $\rho_3(\vec k)$ \\
			\midrule
			$(0,0)$ & $0$--$4$ & $3, 5, 4, 2, 1$ \\
			$(1,0)$ & $0$--$2$ & $2, 2, 1$ \\
			$(0,1)$ & $0$ & $1$ \\
			$(2,0)$ & $0$ & $1$ \\
			\bottomrule
		\end{tabular}
	\end{table}
	
	\subsection{The rank over $\Z_{16}$}
	
	For $s=4$ the rank is a binomial polynomial whose coefficients $\rho_4$ are listed in Table~\ref{tab:rho4} (36 feasible supports, of weights at most $2^3=8$). 
	
	\begin{table}[htbp]
		\centering
		\caption{The coefficients $\rho_4(\vec k)$}\label{tab:rho4}
		\begin{tabular}{@{}ccl@{}}
			\toprule
			$(k_1,k_2,k_3)$ & $k_0$ & $\rho_4(\vec k)$ \\
			\midrule
			$(0,0,0)$ & $0$--$8$ & $4, 10, 20, 27, 24, 17, 8, 2, 1$ \\
			$(1,0,0)$ & $0$--$6$ & $5, 9, 12, 11, 6, 2, 1$ \\
			$(0,1,0)$ & $0$--$4$ & $2, 4, 4, 2, 1$ \\
			$(2,0,0)$ & $0$--$4$ & $4, 5, 4, 2, 1$ \\
			$(0,0,1)$ & $0$ & $1$ \\
			$(1,1,0)$ & $0$--$2$ & $2, 2, 1$ \\
			$(3,0,0)$ & $0$--$2$ & $2, 2, 1$ \\
			$(0,2,0)$ & $0$ & $1$ \\
			$(2,1,0)$ & $0$ & $1$ \\
			$(4,0,0)$ & $0$ & $1$ \\
			\bottomrule
		\end{tabular}
	\end{table}
	
	\subsection{The rank over $\Z_{32}$}
	
	For $s=5$ the rank is a binomial polynomial whose coefficients $\rho_5$ are listed in Table~\ref{tab:rho5} (202 feasible supports, of weights at most $2^{s-1}=16$). 
	
	\begin{table}[htbp]
		\centering
		\caption{The coefficients $\rho_5(\vec k)$} \label{tab:rho5}
		\begin{tabular}{@{}ccl@{}}
			\toprule
			$(k_1,\dots,k_{4})$ & $k_0$ & $\rho_5(\vec{k})$\\
			\midrule
			$(0,0,0,0)$ & $0$--$16$ & $5, 21, 79, 210, 420, 689, 949, 1081, 1012, 774, 488, 267, 116, 41, 16, 2, 1$ \\
			$(1,0,0,0)$ & $0$--$14$ & $10, 41, 111, 219, 340, 443, 475, 420, 306, 185, 88, 35, 14, 2, 1$ \\
			$(0,1,0,0)$ & $0$--$12$ & $5, 18, 46, 88, 138, 168, 158, 115, 62, 29, 12, 2, 1$ \\
			$(2,0,0,0)$ & $0$--$12$ & $20, 56, 110, 165, 198, 206, 176, 119, 64, 29, 12, 2, 1$ \\
			$(0,0,1,0)$ & $0$--$8$ & $2, 8, 19, 27, 24, 17, 8, 2, 1$ \\
			$(1,1,0,0)$ & $0$--$10$ & $9, 24, 45, 66, 76, 65, 42, 23, 10, 2, 1$ \\
			$(3,0,0,0)$ & $0$--$10$ & $27, 53, 77, 92, 90, 69, 44, 23, 10, 2, 1$ \\
			$(0,0,0,1)$ & $0$ & $1$ \\
			$(0,2,0,0)$ & $0$--$8$ & $4, 10, 20, 27, 24, 17, 8, 2, 1$ \\
			$(1,0,1,0)$ & $0$--$6$ & $4, 9, 12, 11, 6, 2, 1$ \\
			$(2,1,0,0)$ & $0$--$8$ & $12, 23, 30, 31, 26, 17, 8, 2, 1$ \\
			$(4,0,0,0)$ & $0$--$8$ & $24, 37, 40, 35, 28, 17, 8, 2, 1$ \\
			$(0,1,1,0)$ & $0$--$4$ & $2, 4, 4, 2, 1$ \\
			$(1,2,0,0)$ & $0$--$6$ & $5, 9, 12, 11, 6, 2, 1$ \\
			$(2,0,1,0)$ & $0$--$4$ & $4, 5, 4, 2, 1$ \\
			$(3,1,0,0)$ & $0$--$6$ & $11, 13, 14, 11, 6, 2, 1$ \\
			$(5,0,0,0)$ & $0$--$6$ & $17, 17, 16, 11, 6, 2, 1$ \\
			$(0,0,2,0)$ & $0$--$0$ & $1$ \\
			$(0,3,0,0)$ & $0$--$4$ & $2, 4, 4, 2, 1$ \\
			$(1,1,1,0)$ & $0$--$2$ & $2, 2, 1$ \\
			$(2,2,0,0)$ & $0$--$4$ & $4, 5, 4, 2, 1$ \\
			$(3,0,1,0)$ & $0$--$2$ & $2, 2, 1$ \\
			$(4,1,0,0)$ & $0$--$4$ & $6, 5, 4, 2, 1$ \\
			$(6,0,0,0)$ & $0$--$4$ & $8, 5, 4, 2, 1$ \\
			$(0,2,1,0)$ & $0$--$0$ & $1$ \\
			$(1,3,0,0)$ & $0$--$2$ & $2, 2, 1$ \\
			$(2,1,1,0)$ & $0$ & $1$ \\
			$(3,2,0,0)$ & $0$--$2$ & $2, 2, 1$ \\
			$(4,0,1,0)$ & $0$ & $1$ \\
			$(5,1,0,0)$ & $0$--$2$ & $2, 2, 1$ \\
			$(7,0,0,0)$ & $0$--$2$ & $2, 2, 1$ \\
			$(0,4,0,0)$ & $0$ & $1$ \\
			$(2,3,0,0)$ & $0$ & $1$ \\
			$(4,2,0,0)$ & $0$ & $1$ \\
			$(6,1,0,0)$ & $0$ & $1$ \\
			$(8,0,0,0)$ & $0$ & $1$ \\
			\bottomrule
		\end{tabular}
	\end{table}
	
	\subsection{A talk for for general $s$}\label{sec:compact}
	
	The tables of this section give explicit formulas for $s\le5$, but they do not by themselves reveal a pattern that could be extrapolated to general $s$.  In this subsection we describe the form that a compact description of $\rho_s(\vec k)$ takes, and we settle the part of it that can be settled without further computation.
	
	Let $z_{i,r}=\dig_r(\zeta_i)$ be the $r$-th binary digit of the normalized coordinate $\zeta_i$ (see \eqref{eq:zeta}).  
	For $M=2^{q_i+r}$,
	\begin{equation}\label{eq:zexp}
		(1+T^M)^{\zeta_i}=\prod_{j=0}^{s-q_i-1}\bigl(1+z_{i,j}T^{M2^j}\bigr),
	\end{equation}
	an identity in $\F$, because $(1+T^M)^{2^j}=1+T^{M2^j}$ and $\zeta_i=\sum_jz_{i,j}2^j$.  Hence the $\zeta$-dependence of the local space is \emph{multi-affine}: the variables $z_{i,j}$ occur at most to the first power.  For a tuple $T=(T_i)_{i\in S}$ of subsets $T_i\subseteq\{0,\dots,s-q_i-1\}$ let
	\[
	P_{j,T}(\vec x)=[T^{2^j}\textstyle\prod_{i\in S}\prod_{r\in T_i}z_{i,r}]\,B_s(T)\prod_{i\in S}\bigl(A_{q_i,\zeta_i}(x_i;T)-1\bigr)
	\]
	be the coefficient of the monomial $\prod_{i\in S}\prod_{r\in T_i}z_{i,r}$ in the corresponding digit coefficient. It is a Boolean function of the message bits, and it does not depend on $\zeta$ any more.
	
	\begin{lemma}\label{lem:zdecomp}
		With the notation above,
		\begin{equation}\label{eq:rhosum}
			\rho_s(\vec k)=\dim_{\F}\spn\bigl\{P_{j,T}:\ 0\le j<s,\ T=(T_i)_{i\in S}\bigr\},
		\end{equation}
		the span running over all $j$ and all tuples $T=(T_i)_{i\in S}$.
	\end{lemma}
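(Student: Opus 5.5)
The plan is to identify $\mathcal W_s(S)$ with the span of the $\zeta$-free coefficient functions $P_{j,T}$, via a two-sided inclusion.

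\emph{First inclusion.} By Lemma~\ref{lem:coeffrep}, each generator $D_{j,S,\zeta_S}$ of $\mathcal W_s(S)$ is the coefficient of $T^{2^j}$ in
\[
B_s(T)\prod_{i\in S}\bigl(A_{q_i,\zeta_i}(x_i;T)-1\bigr).
\]
Substituting the expansion \eqref{eq:zexp} for each factor $(1+T^{2^{q_i+r}})^{\zeta_i}$ shows that this coefficient is a polynomial in the digits $z_{i,r}$ of $\zeta_S$, with coefficients that are Boolean functions of the message. The first point to check is multi-affinity. Inside $A_{q_i,\zeta_i}$ the digits $z_{i,r}$ of a single $\zeta_i$ occur in several factors $r$, so a priori they may appear squared. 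Since each $z_{i,r}\in\{0,1\}$ is evaluated at a bit, we reduce $z^2=z$. This gives a unique multi-affine representative as a function of the $z$-bits. With the $P_{j,T}$ defined as the coefficients of this reduced form, we get
\[
D_{j,S,\zeta_S}=\sum_{T}\Bigl(\prod_{i\in S}\prod_{r\in T_i}z_{i,r}\Bigr)P_{j,T}.
\]
Hence $\mathcal W_s(S)\subseteq\spn\{P_{j,T}\}$.

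\emph{Reverse inclusion.} Here the key is that $\zeta_S$ ranges over the full product $\prod_{i\in S}\Z_{2^{s-q_i}}$. Its digit vectors $(z_{i,r})$ therefore range over all of $\prod_i\F^{s-q_i}$: every column of $\mathcal G$ is allowed, and coordinates outside $S$ are irrelevant. Fix $j$ and view $\zeta\mapsto D_{j,S,\zeta}$ as a function on the Boolean cube of $z$-bits with values in the space of message functions. Its multi-affine coefficients are recovered by Möbius inversion on the cube:
\[
P_{j,T}=\sum_{T'\subseteq T}D_{j,S,\zeta(T')},
\]
where $\zeta(T')$ is the column whose digits $z_{i,r}$ equal $1$ exactly for $r\in T'_i$. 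This holds because $\sum_{T'\subseteq T}\prod_{(i,r)\in T''}[T''\subseteq T']$ equals $2^{|T|-|T''|}$, which is $1$ if $T''=T$ and $0$ otherwise in $\F$. Thus every $P_{j,T}$ is an $\F$-combination of generators of $\mathcal W_s(S)$, giving the reverse inclusion. Taking dimensions and using the definition $\rho_s(\vec k)=\dim\mathcal W_s(S)$ yields \eqref{eq:rhosum}.

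\emph{Main obstacle.} The delicate step is the well-definedness of $P_{j,T}$. The paper's definition "coefficient of the monomial" presupposes a formal multi-affine expression, but the naive expansion may not be multi-affine in $z_{i,r}$ for a fixed $i$. I will handle this by \emph{defining} the coefficients via the reduced, unique multi-affine form, which is exactly what Möbius inversion produces. If the intended definition were the formal coefficient before reduction, the span could change. I would remark that reduction only merges formal coefficients by $\F$-linear sums, and that the inversion formula certifies the reduced coefficients lie in $\mathcal W_s(S)$. The span in \eqref{eq:rhosum} must then be read with the reduced coefficients, which is the multi-affine statement the paper asserts.
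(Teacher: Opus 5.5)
Your proof is correct and is essentially the paper's argument: both write $D_{j,S,\zeta_S}=\sum_T M_{T,\zeta_S}P_{j,T}$ and invert the Kronecker product of $\begin{pmatrix}1&1\\0&1\end{pmatrix}$, which is its own inverse over $\F$, and your M\"obius inversion over the full cube of digits is exactly that inverse written out. Your caveat that $P_{j,T}$ must be read in the reduced ($z^2=z$) multi-affine form is a genuine sharpening. The same digit $z_{i,r}$ does occur in several factors of $A_{q_i,\zeta_i}$, contrary to the paper's remark that the $z_{i,j}$ occur at most to the first power, but this does not change the argument.
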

	\begin{proof}
		For fixed $j$, let $M$ be the matrix with rows indexed by the tuples $T$ and columns by the tuples $\zeta_S$, and with entry $M_{T,\zeta_S}=\prod_{i\in S}\prod_{r\in T_i}z_{i,r}(\zeta_i)$.  By \eqref{eq:zexp} the coefficient $P_{j,T}$ is the coefficient of the monomial $\prod_{i\in S}\prod_{r\in T_i}z_{i,r}$ in the polynomial $B_s\prod_{i\in S}(A_{q_i,\zeta_i}-1)$, so
		\[
		D_{j,S,\zeta_S}(\vec x)=\sum_T M_{T,\zeta_S}\,P_{j,T}(\vec x),
		\]
		that is, the vector of generators indexed by $\zeta_S$ is $M^{T}$ times the vector indexed by $T$.  The matrix $M$ is the Kronecker product, over all bits $z_{i,r}$, of the matrix $\begin{pmatrix}1&1\\0&1\end{pmatrix}$ (the evaluation of $1$ and $z$ at $z=0,1$), hence it is invertible over $\F$.  Therefore the two vectors have the same span, and the row spaces of the generators indexed by $\zeta_S$ and by $T$ coincide.
	\end{proof}
	
	Formula \eqref{eq:rhosum} is the compact description: $\rho_s(\vec k)$ is the rank of a single explicit matrix whose rows are indexed by the pairs $(j,T)$, $0\le j<s$, and whose columns are indexed by the message monomials.  Each entry is the parity of the number of ways to write
	\[
	2^j=\sum_{r\in J}2^r+\sum_{i\in S}\sum_{r=0}^{s-q_i-1}2^{q_i+r}d_{i,r},
	\]
	where $J\subseteq\{0,\dots,s-1\}$ and $0\le d_{i,r}<2^{s-q_i}$, the monomial carries $b_r$ for $r\in J$ and $a_{i,r}$ for $d_{i,r}>0$, and the binary digits of the $d_{i,r}$ with the same $i$ xor to $T_i$.  In particular, since there are $s\cdot2^{\sum_{i\in S}(s-q_i)}$ rows,
	\[
	\rho_s(\vec k)\le s\cdot2^{\sum_{i\in S}(s-q_i)}.
	\]
	
	The simplest instance of \eqref{eq:rhosum} is the maximal weight, where only one tuple $T$ contributes.  This gives the following boundary value of the coefficient table.
	
	\begin{proposition}\label{prop:top}
		If $w(\vec k)=\sum_\ell2^\ell k_\ell=2^{s-1}$, then $\rho_s(\vec k)=1$.
	\end{proposition}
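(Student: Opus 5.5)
We fix $S$ with $\vec k(S)=\vec k$ and $\sum_{i\in S}2^{q_i}=2^{s-1}$, and compute the generators $D_{j,S,\zeta_S}$ directly from the coefficient representation of Lemma~\ref{lem:coeffrep}. By part (f) of its proof, every term of $\prod_{i\in S}(A_{q_i,\zeta_i}(x_i;T)-1)$ has degree at least $w(\vec k)=2^{s-1}$. Since $B_s(T)$ has constant term $1$, the only coefficient $[T^{2^j}]$ with $j<s$ that can be nonzero is $j=s-1$. It equals the coefficient of $T^{2^{s-1}}$ in the product, i.e.\ the product over $i\in S$ of the lowest-degree terms $[T^{2^{q_i}}](A_{q_i,\zeta_i}-1)$. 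Thus $D_{j,S,\zeta_S}=0$ for $j<s-1$, and the contribution of $B_s$ beyond its constant term is irrelevant.

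Next we identify the lowest term of each factor. In $A_{q,\zeta}(x;T)-1$, the degree $2^q$ can only come from $r=0$, namely from $\alpha_0\bigl((1+T^{2^q})^{\zeta}-1\bigr)$. Products of two or more factors $\alpha_r(\cdots)$ have degree at least $2^q+2^{q+1}$, and the terms with $r\ge1$ have degree at least $2^{q+1}$. The coefficient of $T^{2^q}$ in $(1+T^{2^q})^{\zeta}$ is $\zeta\bmod 2=z_{i,0}$, by \eqref{eq:zexp}. Hence
\[
D_{s-1,S,\zeta_S}=\prod_{i\in S}z_{i,0}\,\alpha_{i,0}=\prod_{i\in S}\dig_0(\zeta_i)\prod_{i\in S}a_{i,0}.
\]
One must also check that no mixed degree combination $d_i\ge 2^{q_i}$ with $\sum d_i=2^{s-1}$ has $d_i>2^{q_i}$ for some $i$. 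This is immediate, since every $d_i\ge2^{q_i}$ and the total equals the minimum.

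Consequently $\mathcal W_s(S)$ is spanned by the single function $\prod_{i\in S}a_{i,0}$, possibly together with $0$. Choosing all $\zeta_i=1$ gives a column in $\mathcal G$, since odd normalized values are allowed in each $T_{q_i+1}$. For this column the generator is exactly this monomial, which is nonzero. Hence $\rho_s(\vec k)=1$. Equivalently, in the language of Lemma~\ref{lem:zdecomp}, only $j=s-1$ and the tuple $T_i=\{0\}$ contribute, and $P_{s-1,T}=\prod_i a_{i,0}\ne0$.

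The main point requiring care is the degree bookkeeping in the truncated ring, i.e.\ confirming that the minimal-degree term of $A_{q,\zeta}-1$ is exactly $\alpha_0 z_0T^{2^q}$ with no interference. This follows from the binary expansion \eqref{eq:zexp}. The remaining steps are direct.
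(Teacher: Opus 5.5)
Your proposal is correct and follows the paper's proof: the weight bound from Lemma~\ref{lem:coeffrep} forces the coefficient of $T^{2^{s-1}}$ to be the product of the lowest terms $a_{i,0}\dig_0(\zeta_i)T^{2^{q_i}}$. Since $B_s(T)$ has constant term $1$ it only adds degree, so $D_{j,S,\zeta_S}=0$ for $j<s-1$, and taking all $\zeta_i$ odd gives the single nonzero generator $\prod_{i\in S}a_{i,0}$.
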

	\begin{proof}
		The lowest-degree term of $A_{q_i,\zeta_i}(x_i;T)-1$ is $a_{i,0}\zeta_{i,0}T^{2^{q_i}}$, and every other term of this polynomial has degree at least $2^{q_i+1}$: either it uses a bit $a_{i,r}$ with $r\ge1$, which contributes $2^{q_i+r}\ge2^{q_i+1}$, or it uses $a_{i,0}$ together with a higher power of $(1+T^{2^{q_i}})^{\zeta_i}-1$, which contributes at least $2\cdot2^{q_i}$.  Since $w=2^{s-1}$, a product over $i\in S$ has degree at least $w$, with equality only for the product of these lowest terms, which has degree exactly $w$.  Multiplying by $B_s(T)$ can only add degree, so the coefficient of $T^{2^{s-1}}$ is $\prod_{i\in S}a_{i,0}\zeta_{i,0}$, while the coefficients of $T^{2^j}$ with $j<s-1$ vanish.  Hence every generator $D_{j,S,\zeta_S}$ is either $0$ or the single function $\prod_{i\in S}a_{i,0}$, and the latter occurs when all $\zeta_i$ are odd.  Thus $\mathcal W_s(S)$ is one-dimensional.
	\end{proof}
	
	Proposition~\ref{prop:top} explains the last entry of each row of the tables: whenever the weight $w(\vec k)$ reaches its maximum $2^{s-1}$, the coefficient is $1$.  For intermediate weights the matrix in \eqref{eq:rhosum} has no such collapse, and we do not know a uniform evaluation.  This is the precise sense in which a closed formula for general $s$ is still open: \eqref{eq:rhosum} reduces it to a single rank of an explicit matrix with $s2^{\sum(s-q_i)}$ rows, but it does not evaluate it. 
	
	\section{Conclusion}\label{sec:conclusion}
	We studied the binary rank of the canonical $\Z_{2^s}$-linear Hadamard codes $H^{\vec a}$.  The main result is a representation theorem: for every type, the rank is a binomial polynomial in the parameters $\mu_0=a_1-1$ and $\mu_\ell=a_{\ell+1}$,
	\[
	\rk H^{\vec a}=\sum_{\vec k}\rho_s(\vec k)\prod_{\ell=0}^{s-1}\binom{\mu_\ell}{k_\ell},
	\]
	whose coefficients $\rho_s(\vec k)$ depend only on the alphabet exponent $s$ and on the multiplicity vector $\vec k$, and neither on the type nor on the length (Theorem~\ref{thm:rankformula} and Lemma~\ref{lem:universal}).  
	The coefficients $\rho_s(\vec{k})$ are produced by a finite exact algorithm (Appendix~\ref{app:algo}), based on the coefficient representation of Lemma~\ref{lem:coeffrep}: each local space is generated by the coefficients of $T^{2^j}$ in the product $B_s(T)\prod_{i\in S}(A_{q_i,\zeta_i}(x_i;T)-1)$, and these polynomials are computed by truncated multiplication over $\F$ followed by elimination on monomials.
	
	For $s=2$ and $s=3$ the sum collapses to compact closed forms, namely $2a_1+a_2+\binom{a_1-1}{2}$ and the quartic formula for $\Z_8$, both from \cite{Z8}.  
	For $s=4$ and $s=5$ one obtains explicit binomial polynomials in the type parameters whose coefficients are tabulated, with $36$ and $202$ feasible supports respectively (Tables~\ref{tab:rho4} and~\ref{tab:rho5}). 
	For arbitrary $s$ we do not have a simple closed formula uniformly in $s$.  
	What is proved is the representation theorem above together with a finite exact algorithm that computes $\rho_s(\vec k)$ for each fixed $s$.

	Two questions remain open: an evaluation of these ranks in closed form for intermediate weights, and the ranks of the related families of mixed type, for which the same support decomposition applies but the local spaces differ.
	
	\section*{Declaration of AI use}
	During the preparation and revision of this article, the authors used ChatGPT (OpenAI) and DeepSeek to assist with language refinement, the exploration and development of ideas, and improvements to the clarity. The authors reviewed and revised the AI-assisted content and take full responsibility for the results, the arguments, and the final form of the article.
	
	\appendix
	\section{The algorithm to compute \texorpdfstring{$\rho_s(\vec{k})$}{}}\label{app:algo}
	We state the algorithm in pseudocode and prove that it computes $\rho_s(\vec k)$.
	\begin{algorithm}[H]
		\caption{The coefficients of the rank formula}\label{alg:algo}
		\scriptsize
		\begin{algorithmic}[1]
			\Require $s\ge1$ and a multiplicity vector $\vec k=(k_0,\dots,k_{s-1})$ with $\sum_{\ell}2^\ell k_\ell\le 2^{s-1}$.
			\Ensure $\rho_s(\vec k)$.
			\State $K\gets 2^{s-1}$ and $w\gets\sum_{\ell}2^\ell k_\ell$.
			\If{$w>K$}
			\State \textbf{return} $0$
			\EndIf
			\State Choose $S$ with $k_\ell$ factors of level $\ell+1$ for each $\ell$. Let $x_i$ be the coefficient of the factor $i\in S$ and $\zeta_i$ its column value.
			\For{each $i\in S$}
			\State Compute $F_i(T)\gets A_{q_i,\zeta_i}(x_i;T)-1$ by multiplying successively the $s-q_i$ factors $1+a_{i,r}\bigl((1+T^{2^{q_i+r}})^{\zeta_i}-1\bigr)$, truncating at $T^{K+1}$ after each multiplication.
			\EndFor
			\State $P\gets\prod_{i\in S}F_i(T)$, computed by successive multiplication truncated at $T^{K+1}$, replacing the $\F$-span of the partial products by a basis of that span after each step.
			\State $P\gets P\cdot B_s(T)$, truncated at $T^{K+1}$.
			\For{$j=0,\dots,s-1$}
			\State Collect the coefficient of $T^{2^j}$ in $P$. It is a Boolean function of the message bits.
			\EndFor
			\State \textbf{return} the dimension over $\F$ of the span of the collected functions, computed by Gaussian elimination on their monomials.
		\end{algorithmic}
	\end{algorithm}
	
	\begin{theorem}\label{thm:algo}
		For every $s\ge1$ and every $\vec k$ with $w(\vec k)\le2^{s-1}$ the output of Algorithm~\ref{alg:algo} is $\rho_s(\vec k)=\dim\mathcal W_s(S)$.
	\end{theorem}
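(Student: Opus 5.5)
The plan is to check the algorithm line by line against Lemma~\ref{lem:coeffrep} and Lemma~\ref{lem:universal}, showing that every truncation and every basis replacement leaves the final span unchanged. First I dispose of the trivial branch. If $w>2^{s-1}$ the algorithm returns $0$, and Lemma~\ref{lem:coeffrep} gives $\mathcal W_s(S)=0$, so this is correct. Otherwise the chosen $S$ has $\vec k(S)=\vec k$. By Lemmas~\ref{lem:equaldim} and~\ref{lem:universal}, $\rho_s(\vec k)=\dim\mathcal W_s(S)$ for this $S$, and it may be computed in the local box $\mathcal M_{s,\vec k}$. By Lemma~\ref{lem:coeffrep}, $\mathcal W_s(S)$ is the span over $j<s$ and over all $\zeta_S$ of $[T^{2^j}]\,B_s(T)\prod_{i\in S}(A_{q_i,\zeta_i}(x_i;T)-1)$. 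I read the algorithm as ranging over all $\zeta_S$. Equivalently, by Lemma~\ref{lem:zdecomp}, one may keep the bits $z_{i,r}$ symbolic and take the span of the $\zeta$-free coefficients $P_{j,T}$.

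Second, I justify the truncations. Every index $2^j$ with $j<s$ satisfies $2^j\le 2^{s-1}=K$. In the ring of Boolean functions of the message bits, the coefficient of $T^d$ in a product depends only on the coefficients of degree $\le d$ in each factor. Hence truncating at $T^{K+1}$ after each multiplication, whether in forming $F_i$, in forming $\prod_i F_i$, or in multiplying by $B_s$, does not change any coefficient of degree $\le K$. The extracted coefficients are therefore exactly the $D_{j,S,\zeta_S}$.

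Third, I handle the step that replaces the span of the partial products by a basis. This is the main obstacle, because the operation must be phrased so that it is linear. I would make it precise as follows. For each prefix $S_t=\{i_1,\dots,i_t\}$ of $S$, let $\mathcal P_t$ be the $\F$-span of the truncated polynomials $\prod_{u\le t}F_{i_u}(T)$, with $\zeta_{i_1},\dots,\zeta_{i_t}$ ranging over their column values. Each element of $\mathcal P_t$ is a vector of Boolean-function coefficients indexed by degree. The key facts are these:
\begin{itemize}
\item multiplication by $F_{i_{t+1}}(T)$, with $\zeta_{i_{t+1}}$ fixed, followed by truncation, is $\F$-linear in the first factor;
\item the new factor involves only the bits of $x_{i_{t+1}}$, which are disjoint from those already present.
\end{itemize}
It follows that $\mathcal P_{t+1}$ is the sum over $\zeta_{i_{t+1}}$ of the images of $\mathcal P_t$ under these maps. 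So $\mathcal P_{t+1}$ is determined by any basis of $\mathcal P_t$, and replacing $\mathcal P_t$ by a basis after each step loses nothing. By induction $\mathcal P_{|S|}$ is the span of all full products.

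Finally, multiplication by $B_s(T)$ and extraction of $[T^{2^j}]$ are linear. Hence the collected functions span $\{[T^{2^j}]B_sP: P\in\mathcal P_{|S|},\ j<s\}=\mathcal W_s(S)$. Gaussian elimination on the unique ANF coordinates, which form the monomial basis of \S\ref{sec:anf}, returns its dimension, which is $\rho_s(\vec k)$. Termination is clear, since every loop is finite.
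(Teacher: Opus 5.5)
Your proposal is correct and follows the paper's proof essentially step by step. Truncation at $T^{K+1}$ is harmless because every extracted degree $2^j$ is at most $K$, the basis replacement is justified by linearity of multiplication by a fixed polynomial, Lemma~\ref{lem:coeffrep} identifies the extracted coefficients with the $D_{j,S,\zeta_S}$, and Gaussian elimination on the ANF returns the dimension. Your prefix spans $\mathcal P_t$, taken over all column values, only make explicit the ranging over $\zeta_S$ that the paper's step (ii) leaves implicit.
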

	\begin{proof}
		We justify the steps in turn. 
		\begin{enumerate}[label=(\roman*),leftmargin=*]
			\item By Lemma~\ref{lem:coeffrep}, $A_{q,\zeta}(x;T)-1$ is the product over $0\le r<s-q$ of the factors $1+\alpha_r\bigl((1+T^{2^{q+r}})^{\zeta}-1\bigr)$. The successive multiplication therefore produces $F_i$, and truncating at $T^{K+1}$ is harmless because every coefficient extracted later has degree at most $K$.
			
			\item By Lemma~\ref{lem:coeffrep} the space to be computed is generated by the coefficients of $T^{2^j}$, $j<s$, in $B_s(T)\prod_{i\in S}F_i(T)$. The product is formed by bilinearity over $\F$, and if $g\mapsto g\cdot h$ denotes the multiplication by a fixed polynomial, which is linear, then replacing the span of the partial products by a basis of that span does not change the span obtained after multiplying by the remaining factors.  Truncation is harmless by the first step.  Moreover the product has no term of degree below $w=\sum_{i\in S}2^{q_i}$ (the weight bound of Lemma~\ref{lem:coeffrep}). Hence when $w>K$ there is no coefficient of $T^{2^j}$ with $j<s$, the output $0$ is correct, and when $w\le K$ the degrees up to $K$ suffice.
			
			\item By Lemma~\ref{lem:coeffrep} the coefficient of $T^{2^j}$ in $B_s\prod_{i\in S}F_i$ is exactly the function $D_{j,S,\zeta_S}$. 
			
			\item By definition $\rho_s(\vec k)=\dim\mathcal W_s(S)$ is the dimension of the span of the functions $D_{j,S,\zeta_S}$ as $j<s$ and $\zeta_S$ vary, and a set of Boolean functions, written in the algebraic normal form, is linearly independent if and only if the corresponding set of monomial vectors is. Gaussian elimination on the monomials therefore returns that dimension.
		\end{enumerate}
		We complete the proof. 
	\end{proof}

\end{document}